\newtheorem{theorem}{Theorem}
\newtheorem{lemma}{Lemma}
\newtheorem{corollary}{Corollary}
\theoremstyle{definition}
\newtheorem{problem}{Problem}
\newcommand{\Prefix}{\mathsf{Prefix}}
\newcommand{\SHT}{\mathsf{SHT}}
\newcommand{\CPH}{\mathsf{CPH}}
\newcommand{\parent}{\mathsf{parent}}
\newcommand{\anc}{\mathsf{anc}}
\newcommand{\nma}{\mathsf{nma}}
\newcommand{\depth}{\mathsf{depth}}
\newcommand{\str}{\mathsf{str}}
\newcommand{\slink}{\mathsf{sl}}
\newcommand{\rslink}{\mathsf{rsl}}
\newcommand{\mrp}{\mathsf{mrp}}
\newcommand{\occ}{\mathit{occ}}
\newcommand{\inputtrie}{\boldsymbol{T}}
\newcommand{\fptrie}{\boldsymbol{T}_{\mathsf{FP}}}
\newcommand{\na}{\mathsf{na}}
\newcommand{\TrieSigma}{\Sigma_{\inputtrie}}
\newcommand{\triesigma}{\sigma_{\inputtrie}}
\newcommand{\CT}{\mathsf{CT}}
\newcommand{\PD}{\mathsf{PD}}
\newcommand{\FP}{\mathsf{FP}}
\newcommand{\fact}{\mathsf{f}}
\newcommand{\lensum}{\mathsf{lsum}}
\newcommand{\Front}{\mathcal{F}}
\newcommand{\Zero}{\mathcal{Z}}
\title{Position Heaps for Cartesian-tree Matching \\ on Strings and Tries}
\author{Akio~Nishimoto$^{1}$}
\author{Noriki~Fujisato$^{1}$}
\author{Yuto~Nakashima$^{1}$}
\author{Shunsuke~Inenaga$^{1,2}$}
\affil{
  \normalsize{
  \textit{$^1$Department of Informatics, Kyushu University, Japan}\\
  \texttt{\{nishimoto.akio, noriki.fujisato, yuto.nakashima, inenaga\}@inf.kyushu-u.ac.jp}\\
  \textit{$^2$PRESTO, Japan Science and Technology Agency, Japan}
  }
}
\date{}
\begin{document}
\maketitle

\begin{abstract}
  The \emph{Cartesian-tree pattern matching} is a recently introduced scheme
  of pattern matching that detects fragments in a sequential data stream
  which have a similar structure as a query pattern.
  Formally, Cartesian-tree pattern matching seeks all substrings $S'$ of
  the text string $S$ such that the Cartesian tree of $S'$
  and that of a query pattern $P$ coincide.
  In this paper, we present a new indexing structure for this problem,
  called the \emph{Cartesian-tree Position Heap} (\emph{CPH}).
  Let $n$ be the length of the input text string $S$,
  $m$ the length of a query pattern $P$, and $\sigma$ the alphabet size.
  We show that the CPH of $S$, denoted $\CPH(S)$,
  supports pattern matching queries in
  $O(m (\sigma + \log (\min\{h, m\})) + \occ)$ time with $O(n)$ space,
  where $h$ is the height of the CPH and $\occ$ is the number of
  pattern occurrences.
  We show how to build $\CPH(S)$ in $O(n \log \sigma)$ time with $O(n)$ working space.
  Further, we extend the problem to the case where the text is a
  labeled tree (i.e. a trie).
  Given a trie $\inputtrie$ with $N$ nodes,
  we show that the CPH of $\inputtrie$, denoted $\CPH(\inputtrie)$,
  supports pattern matching queries on the trie in
  $O(m (\sigma^2 + \log (\min\{h, m\})) + \occ)$ time with $O(N \sigma)$ space.
  We also show a construction algorithm for $\CPH(\inputtrie)$
  running in $O(N \sigma)$ time and $O(N \sigma)$ working space.
\end{abstract}

\section{Introduction} \label{sec:intro}


If the Cartesian trees $\CT(X)$ and $\CT(Y)$
of two strings $X$ and $Y$ are equal,
then we say that $X$ and $Y$ \emph{Cartesian-tree match} (\emph{ct-match}).
The \emph{Cartesian-tree pattern matching problem} (\emph{ct-matching problem})
~\cite{ParkBALP20} is, given a text string $S$ and a pattern $P$,
to find all substrings $S'$ of $S$ that ct-match with $P$.

String equivalence with ct-matching 
belongs to the class of \emph{substring-consistent equivalence relation} (\emph{SCER})~\cite{MatsuokaAIBT16},
namely, the following holds:
If two strings $X$ and $Y$ ct-match,
then $X[i..j]$ and $Y[i..j]$ also ct-match for any $1 \leq i \leq j \leq |X|$.
Among other types of SCERs (\cite{Baker93,baker95parameterized,Baker96,IIT11,KimH16}), ct-matching is the most related to order-peserving matching
(op-matching)~\cite{KimEFHIPPT14,ChoNPS15,CrochemoreIKKLP16}.
Two strings $X$ and $Y$ are said to op-match
if the relative order of the characters in $X$
and the relative order of the characters in $Y$ are the same.
It is known that with ct-matching
one can detect some interesting occurrences of a pattern
that cannot be captured with op-matching.
More precisely, if two strings $X$ and $Y$ op-match,
then $X$ and $Y$ also ct-match. However, the reverse is not true.
With this property in hand, ct-matching is motivated for analysis of
time series such as stock charts~\cite{ParkBALP20,FuCLN07}.

This paper deals with the indexing version of the ct-matching problem.
Park et al.~\cite{ParkBALP20} proposed the \emph{Cartesian suffix tree}
(\emph{CST}) for a text string $S$ that can be built in
$O(n \log n)$ worst-case time or $O(n)$ expected time,
where $n$ is the length of the text string $S$.
The $\log n$ factor in the worst-case complexity
is due to the fact that the \emph{parent-encoding},
a key concept for ct-matching introduced in~\cite{ParkBALP20},
is a sequence of integers in range $[0..n-1]$.
While it is not explicitly stated in Park et al.'s paper~\cite{ParkBALP20},
our simple analysis (c.f. Lemma~\ref{lem:number_of_branches} in Section~\ref{sec:matching})
reveals that the CST supports pattern matching queries
in $O(m \log m + \occ)$ time,
where $m$ is the pattern length and $\occ$ is the number of
pattern occurrences.


In this paper, we present a new indexing structure for this problem,
called the \emph{Cartesian-tree Position Heap} (\emph{CPH}).
We show that the CPH of $S$, which occupies $O(n)$ space,
can be built in $O(n \log \sigma)$ time with $O(n)$ working space
and supports pattern matching queries in
$O(m (\sigma + \log (\min\{h, m\})) + \occ)$ time,
where $h$ is the height of the CPH.
Compared to the afore-mentioned CST,
our CPH is the \emph{first} index for ct-matching
that can be built in worst-case linear time for constant-size alphabets,
while pattern matching queries with our CPH can be slower
than with the CST when $\sigma$ is large.

We then consider the case where the text is a labeled tree (i.e. a trie).
Given a trie $\inputtrie$ with $N$ nodes,
we show that the CPH of $\inputtrie$,
which occupies $O(N \sigma)$ space,
can be built in $O(N \sigma)$ time and $O(N \sigma)$ working space.
We also show how to support pattern matching queries
in $O(m (\sigma^2 + \log (\min\{h, m\})) + \occ)$ time in the trie case.
To our knowledge,
our CPH is the first indexing structure for ct-matching on tries
that uses linear space for constant-size alphabets.

Conceptually, our CPH is most related to
the \emph{parameterized position heap} (\emph{PPH}) for
a string~\cite{FujisatoNIBT18} and for a trie~\cite{FujisatoNIBT19b},
in that our CPHs and the PPHs are both constructed
in an incremental manner where the suffixes of an input string and
the suffixes of an input trie
are processed in increasing order of their lengths.
However, some new techniques are required in the construction of
our CPH due to different nature of the
\emph{parent encoding}~\cite{ParkBALP20} of strings for ct-matching,
from the \emph{previous encoding}~\cite{Baker93} of strings for
parameterized matching.

\section{Preliminaries}

\subsection{Strings and (Reversed) Tries}

Let $\Sigma$ be an ordered alphabet of size $\sigma$.
An element of $\Sigma$ is called a \emph{character}.
An element of $\Sigma^*$ is called a \emph{string}.
For a string $S \in \Sigma^*$,
let $\sigma_S$ denote the number of distinct characters in $S$.

The empty string $\varepsilon$ is a string of length 0,
namely, $|\varepsilon| = 0$.
For a string $S = XYZ$, $X$, $Y$ and $Z$ are called
a \emph{prefix}, \emph{substring}, and \emph{suffix} of $S$, respectively.
The set of prefixes of a string $S$ is denoted by $\Prefix(S)$.
The $i$-th character of a string $S$ is denoted by
$S[i]$ for $1 \leq i \leq |S|$,
and the substring of a string $S$ that begins at position $i$ and
ends at position $j$ is denoted by $S[i..j]$ for $1 \leq i \leq j \leq |S|$.
For convenience, let $S[i..j] = \varepsilon$ if $j < i$.
Also, let $S[i..] = S[i..|S|]$ for any $1 \leq i \leq |S|+1$.

A \emph{trie} is a rooted tree that represents a set of strings,
where each edge is labeled with a character from $\Sigma$
and the labels of the out-going edges of each node is mutually distinct.
Tries are natural generalizations to strings in that
tries can have branches while strings are sequences without branches.

Let $\mathbf{x}$ be any node of a given trie $\inputtrie$,
and let $\mathbf{r}$ denote the root of $\inputtrie$.
Let $\depth(\mathbf{x})$ denote the depth of $\mathbf{x}$.
When $\mathbf{x} \neq \mathbf{r}$,
let $\parent(\mathbf{x})$ denote the parent of $\mathbf{x}$.
For any $0 \leq j \leq \depth(\mathbf{x})$,
let $\anc(\mathbf{x}, j)$ denote the $j$-th ancestor of $\mathbf{x}$,
namely, $\anc(\mathbf{x}, 0) = \mathbf{x}$ and $\anc(\mathbf{x}, j) = \parent(\anc(\mathbf{x}, j-1))$
for $1 \leq j \leq \depth(\mathbf{x})$.
It is known that after a linear-time processing on $\inputtrie$,
$\anc(\mathbf{x}, j)$ for any query node $\mathbf{x}$ and integer $j$
can be answered in $O(1)$ time~\cite{BenderF04}.

\sinote*{modified}{%
For the sake of convenience, 
in the case where our input is a trie $\inputtrie$,
then we consider its \emph{reversed trie} where the path labels are read
in the leaf-to-root direction.
On the other hand, the trie-based data structures
(namely position heaps) we build for input strings and reversed tries
are usual tries where the path labels are read
in the root-to-leaf direction.
}%

For each (reversed) path $(\mathbf{x}, \mathbf{y})$ in $\inputtrie$
such that $\mathbf{y} = \anc(\mathbf{x}, j)$
with $j = |\depth(\mathbf{x})|-|\depth(\mathbf{y})|$,
let $\str(\mathbf{x},\mathbf{y})$ denote the string obtained by
concatenating the labels of the edges from $\mathbf{x}$ to $\mathbf{y}$.
For any node $\mathbf{x}$ of $\inputtrie$,
let $\str(\mathbf{x}) = \str(\mathbf{x}, \mathbf{r})$.

Let $N$ be the number of nodes in $\inputtrie$.
We associate a unique \emph{id} to each node of $\inputtrie$.
Here we use a bottom-up level-order traversal rank as
the id of each node in $\inputtrie$,
and we sometimes identify each node with its id.
For each node id $i$~($1 \leq i \leq N$) let $\inputtrie[i..] = \str(i)$,
i.e., $\inputtrie[i..]$ is the path string from node $i$
to the root $\mathbf{r}$.



\subsection{Cartesian-tree Pattern Matching}

The \emph{Cartesian tree} of a string $S$,
denoted $\CT(S)$, is the rooted tree with $|S|$ nodes
which is recursively defined as follows:
\begin{itemize}
  \item If $|S| = 0$, then $\CT(S)$ is the empty tree.
  \item If $|S| \geq 1$, then
$\CT(S)$ is the tree whose root $r$ stores the left-most minimum value $S[i]$ in $S$, namely, $r = S[i]$ iff $S[i] \leq S[j]$ for any $i \neq j$ and $S[h] > S[i]$ for any $h < i$. The left-child of $r$ is $\CT(S[1..i-1])$ and the right-child of $r$ is $\CT(S[i+1..|S|])$.
\end{itemize}

The \emph{parent distance encoding} of a string $S$ of length $n$,
denoted $\PD(S)$, is a sequence of $n$ integers over $[0..n-1]$ such that
\[
\PD(S)[i] =
\begin{cases}
  i - \max_{1 \leq j < i} \{j \mid S[j] \leq S[i]\} & \mbox{if such $j$ exists}, \\
  0 & \mbox{otherwise.}
\end{cases}
\]
Namely, $\PD(S)[i]$ represents the distance to from position $i$
to its nearest left-neighbor position $j$ that stores a value that is less than or equal to $S[i]$.


A tight connection between $\CT$ and $\PD$ is known:

\begin{lemma}[\cite{SongGRFLP21}] \label{lem:CT-PD}
  For any two strings $S_1$ and $S_2$ of equal length,
  $\CT(S_1) = \CT(S_2)$ iff $\PD(S_1) = \PD(S_2)$.
\end{lemma}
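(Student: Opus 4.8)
The plan is to prove both directions by establishing that $\PD(S)$ is a faithful linear-algebra–free encoding of the recursive tree structure $\CT(S)$, so that each determines the other. I would proceed by induction on the common length $n = |S_1| = |S_2|$, since the Cartesian tree is itself recursively defined and the parent-distance encoding decomposes naturally along the same split point.

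First I would prove the implication $\CT(S_1) = \CT(S_2) \Rightarrow \PD(S_1) = \PD(S_2)$. The key observation is that $\PD(S)[i]$ has a purely structural interpretation in terms of $\CT(S)$: the nearest left-neighbor position $j < i$ with $S[j] \le S[i]$ is exactly the position whose node is the nearest ancestor of node $i$ lying to its left in the tree (equivalently, the position reached by walking up from $i$ until one first moves into a left subtree boundary). More concretely, in $\CT(S)$ the value stored at a node is $\le$ the values in its subtree, and the left-to-right order of nodes corresponds to string position; so the nearest smaller-or-equal element to the left of position $i$ is determined solely by the shape of $\CT(S)$ together with this position ordering. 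Since $\CT(S_1) = \CT(S_2)$ means the two trees are identical as ordered trees on positions $1,\dots,n$, the quantity $\PD[i]$ — being a function of that ordered-tree structure alone — must agree for every $i$, giving $\PD(S_1) = \PD(S_2)$.

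For the converse $\PD(S_1) = \PD(S_2) \Rightarrow \CT(S_1) = \CT(S_2)$, I would argue inductively by recovering the tree from the encoding. The root of $\CT(S)$ sits at the leftmost minimum position $i$, which is characterized as the \emph{last} position with $\PD[i] = 0$: positions with $\PD$-value $0$ are exactly those having no earlier position with a smaller-or-equal value, i.e. the left-to-right minima, and the leftmost global minimum is the final such position. Hence $\PD(S_1) = \PD(S_2)$ forces the same split index $i$. I would then check that the restrictions of $\PD(S)$ to the ranges $[1..i-1]$ and $[i+1..n]$ coincide with (appropriately position-shifted copies of) $\PD(S[1..i-1])$ and $\PD(S[i+1..n])$ — this uses the SCER-type locality already noted in the introduction, namely that parent-distance pointers never cross the root position $i$ (no position in the right part points past $i$, since $S[i]$ is a minimum, and positions in the left part point only within the left part). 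Applying the induction hypothesis to the two halves then yields equal left and right subtrees, hence $\CT(S_1) = \CT(S_2)$.

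The main obstacle I anticipate is the bookkeeping in the converse step: one must verify carefully that $\PD$ restricted to the right block $[i+1..n]$ really equals $\PD(S[i+1..n])$ after shifting indices, and in particular that no pointer from the right block reaches back across position $i$. This hinges on $S[i]$ being a minimum (so any position to its right finds $S[i]$, or something between, as a valid smaller-or-equal left neighbor, keeping the pointer inside or at the boundary) and requires handling the boundary pointers that land exactly on $i$. The left block is slightly easier since it is a genuine prefix, but the argument that the leftmost minimum is identified as the last zero of $\PD$ also deserves a clean justification. Once these locality facts are pinned down, the induction closes routinely.
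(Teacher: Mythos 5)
The paper gives no proof of this lemma at all: it is imported verbatim from \cite{SongGRFLP21} as a known result, so there is nothing internal to compare your argument against, and the right question is simply whether your proof stands on its own. It does. Your forward direction rests on the correct structural characterization: $\PD(S)[i]=i-j$ where $j$ is the lowest ancestor of $i$ in $\CT(S)$ lying to the left of $i$ (i.e., the first ancestor reached from $i$ via a right-child edge), and $\PD(S)[i]=0$ when no such ancestor exists. One sharpening you should make explicit, since $\PD$ uses the asymmetric condition $S[j]\le S[i]$: with the leftmost-minimum tie-breaking rule, values in a node's \emph{left} subtree are \emph{strictly} greater than the node's value, while values in its right subtree are only $\ge$ it. This strictness is what makes the nearest smaller-\emph{or-equal} left neighbor an ancestor (take the LCA $b$ of the candidate $j^*$ and $i$; if $b\neq j^*$ then $S[b]\le S[i]$ with $j^*<b<i$, contradicting maximality of $j^*$), so your blanket statement ``the value at a node is $\le$ the values in its subtree'' is not by itself enough, though the conclusion you draw from it is right. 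Your converse is also correct, including both delicate points you flag: the split index is the \emph{last} zero of $\PD$ (zeros are exactly the strict prefix minima, which is the same observation as Lemma~\ref{lem:number_of_zeros} of this paper), and the right block $[i+1..n]$ satisfies $\PD(S)[r]=\PD(S[i+1..n])[r-i]$ except at positions whose pointer lands exactly on $i$, where the global value is $r-i$ and the local value is $0$; since $S[i]$ is a global minimum no pointer crosses $i$, so the global restriction and the local encoding determine each other and the induction closes. In short: a correct, self-contained proof of a fact the paper only cites; it would be complete after pinning down the tie-breaking detail in the forward direction.
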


For two strings $S_1$ and $S_2$,
we write $S_1 \approx S_2$ iff $\CT(S_1) = \CT(S_2)$
(or equivalently $\PD(S_1) = \PD(S_2)$).
We also say that $S_1$ and $S_2$ \emph{ct-match} when $S_1 \approx S_2$.
See Fig.~\ref{fig:CT_PD} for a concrete example.

\begin{figure}[tb]
  \centerline{
    \includegraphics[scale=0.4]{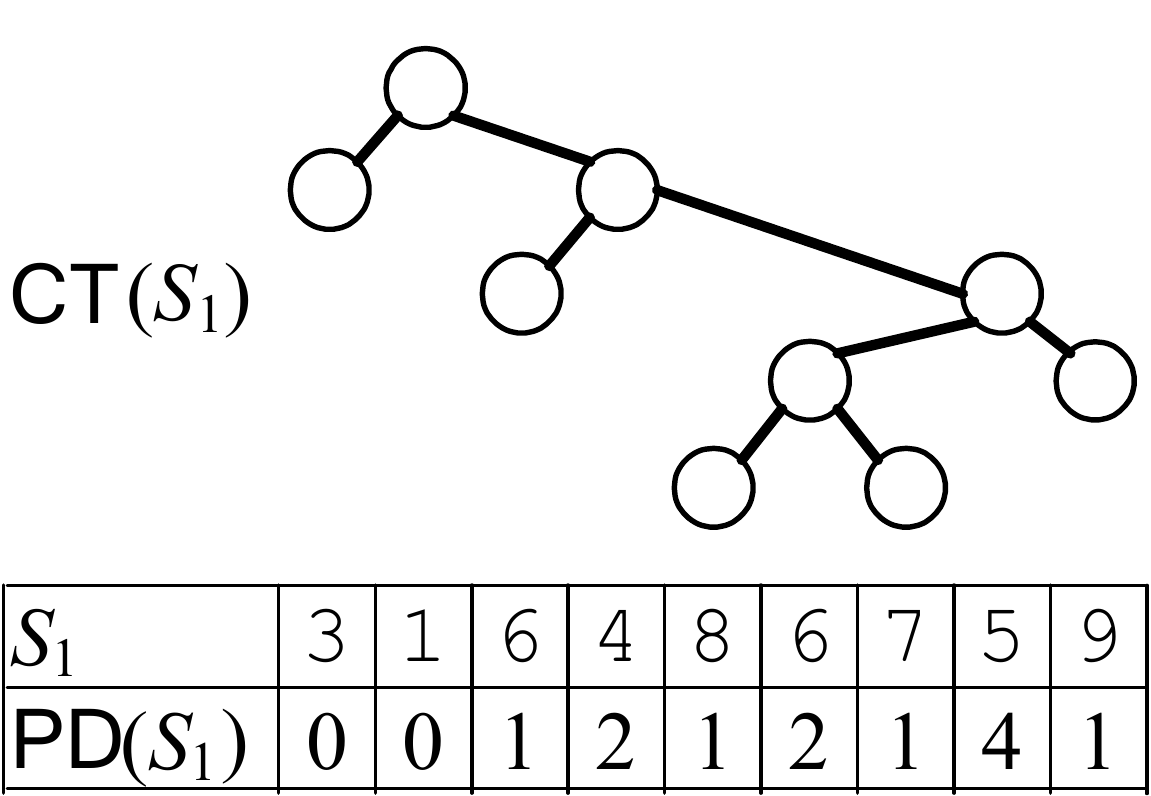}
    \hfil
    \includegraphics[scale=0.4]{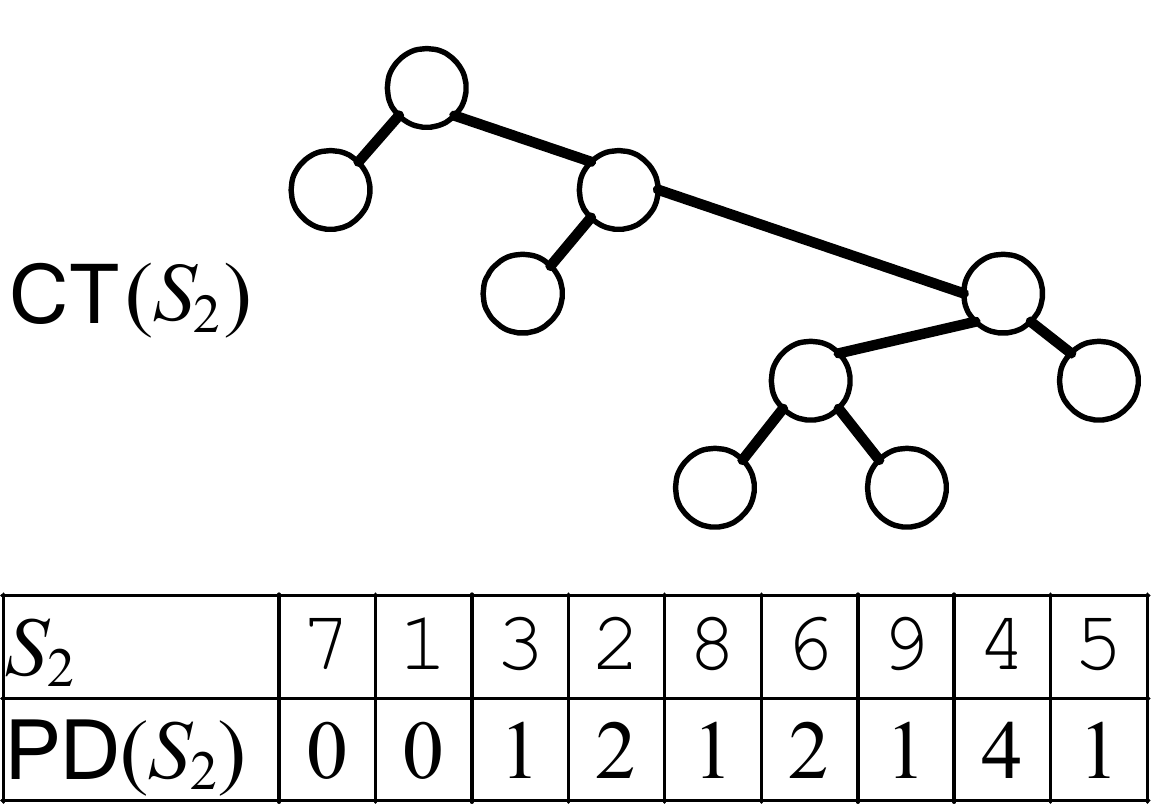}
  }
  \caption{Two strings $S_1 = \mathtt{316486759}$ and $S_2 = \mathtt{713286945}$
  ct-match since $\CT(S_1) = \CT(S_2)$ and $\PD(S_1) = \PD(S_2)$.
  }
  \label{fig:CT_PD}
\end{figure}

We consider the indexing problems for Cartesian-tree pattern matching on a text string and a text trie, which are respectively defined as follows:
\begin{problem}[Cartesian-Tree Pattern Matching on Text String] \label{prob:string} 
  \begin{description}
  \item[Preprocess:] A text string $S$ of length $n$.
  \item[Query:] A pattern string $P$ of length $m$.
  \item[Report:] All text positions $i$ such that $S[i..i+m-1] \approx P$.
  \end{description}
\end{problem}

\begin{problem}[Cartesian-Tree Pattern Matching on Text Trie] \label{prob:trie} 
  \begin{description}
  \item[Preprocess:] A text trie $\inputtrie$ with $N$ nodes.
  \item[Query:] A pattern string $P$ of length $m$.
  \item[Report:] All trie nodes $i$ such that $(\inputtrie[i..])[1..m] \approx P$.
  \end{description}
\end{problem}

\subsection{Sequence Hash Trees}

Let $\mathcal{W} = \langle w_1, \ldots, w_k \rangle$ be a sequence of
non-empty strings such that for any $1 < i \leq k$, 
$w_i \notin \Prefix(w_j)$ for any $1 \leq j < i$.
The \emph{sequence hash tree}~\cite{coffman} of a sequence
$\mathcal{W} = \langle w_1, \ldots, w_k \rangle$ of $k$ strings,
denoted $\SHT(\mathcal{W}) = \SHT(\mathcal{W})^{k}$,
is a trie structure that is incrementally built as follows:

\begin{enumerate}
  \item $\SHT(\mathcal{W})^0 = \SHT(\langle \ \rangle)$ for the empty sequence $\langle \ \rangle$ is the tree only with the root.
  \item For $i = 1, \ldots, k$, $\SHT(\mathcal{W})^i$ is obtained by inserting  the shortest prefix $u_{i}$ of $w_i$ that does not exist in $\SHT(\mathcal{W})^{i-1}$. This is done by finding the longest prefix $p_i$ of $w_i$ that exists in $\SHT(\mathcal{W})^{i-1}$, and adding the new edge $(p_i, c, u_i)$, where $c = w_i[|p_{i}|+1]$ is the first character of $w_i$ that could not be traversed in $\SHT(\mathcal{W})^{i-1}$.
\end{enumerate}

Since we have assumed that each $w_i$ in $\mathcal{W}$ is not a prefix of 
$w_j$ for any $1 \leq j < i$, 
the new edge $(p_{i}, c, u_{i})$ is always created for each $1 \leq i \leq k$.
This means that $\SHT(\mathcal{W})$ contains
exactly $k+1$ nodes (including the root).

To perform pattern matching queries efficiently,
each node of $\SHT(\mathcal{W})$ is augmented with
the \emph{maximal reach pointer}.
For each $1 \leq i \leq k$,
let $u_i$ be the newest node in $\SHT(\mathcal{W})^{i}$,
namely, $u_i$ is the shortest prefix of $w_i$
which did not exist in $\SHT(\mathcal{W})^{i-1}$.
Then, in the complete sequence hash tree
$\SHT(\mathcal{W}) = \SHT(\mathcal{W})^k$,
we set $\mrp(u_i) = u_j$ iff $u_j$ is the deepest node
in $\SHT(\mathcal{W})$ such that $u_j$ is a prefix of $w_i$.
Intuitively, $\mrp(u_i)$ represents
the last visited node $u_j$ when we traverse $w_i$
from the root of the complete $\SHT(\mathcal{W})$.
Note that $j \geq i$ always holds.
When $j = i$ (i.e. when the maximal reach pointer is a self-loop),
then we can omit it because it is not used in the pattern matching algorithm.
%

\section{Cartesian-tree Position Heaps for Strings}
\label{sec:cph_string}

\begin{wrapfigure}[21]{r}{0.7\textwidth}
  \centering
    \raisebox{5mm}{\includegraphics[scale=0.4]{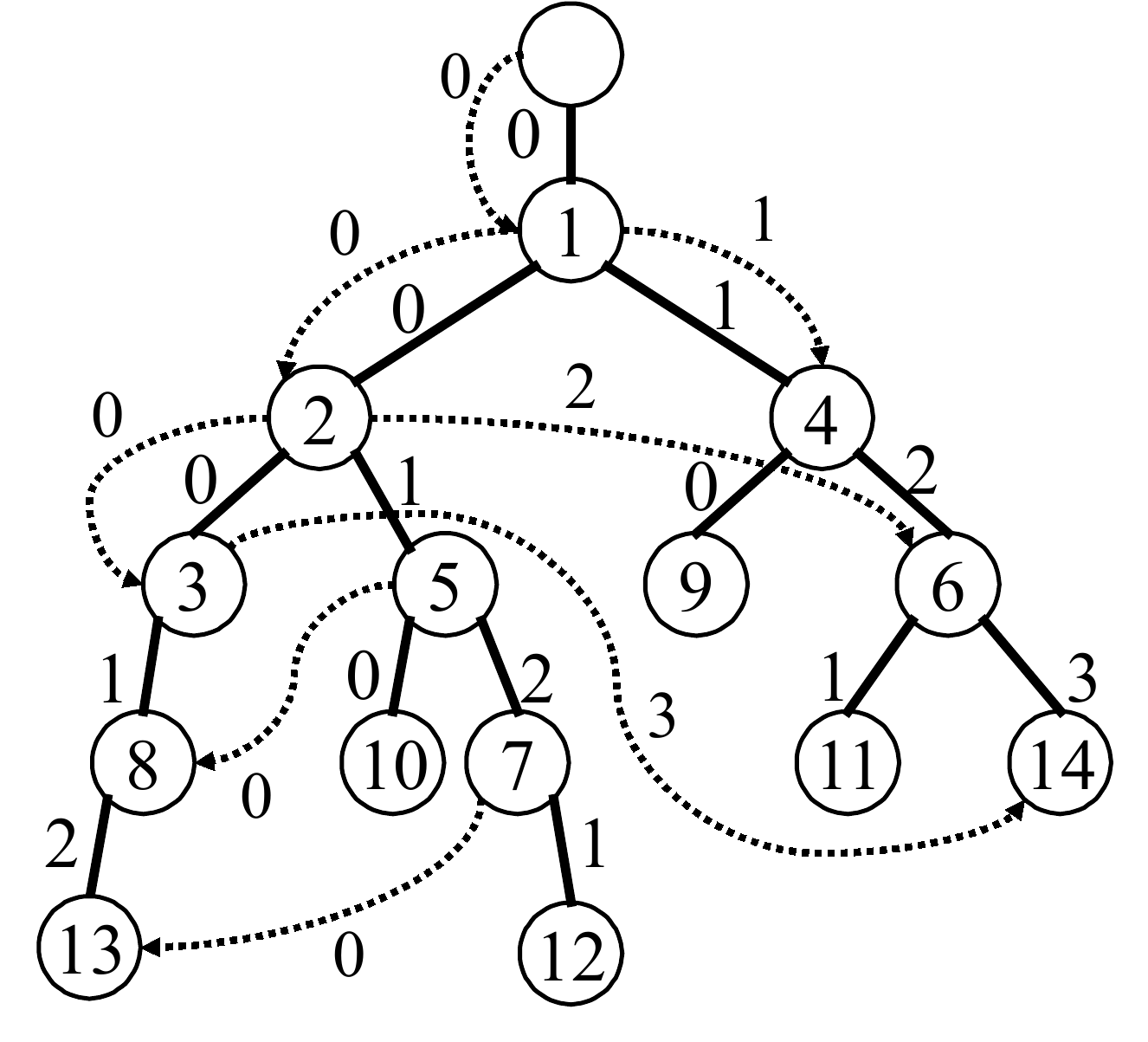}}
    \hfill
    \includegraphics[scale=0.5]{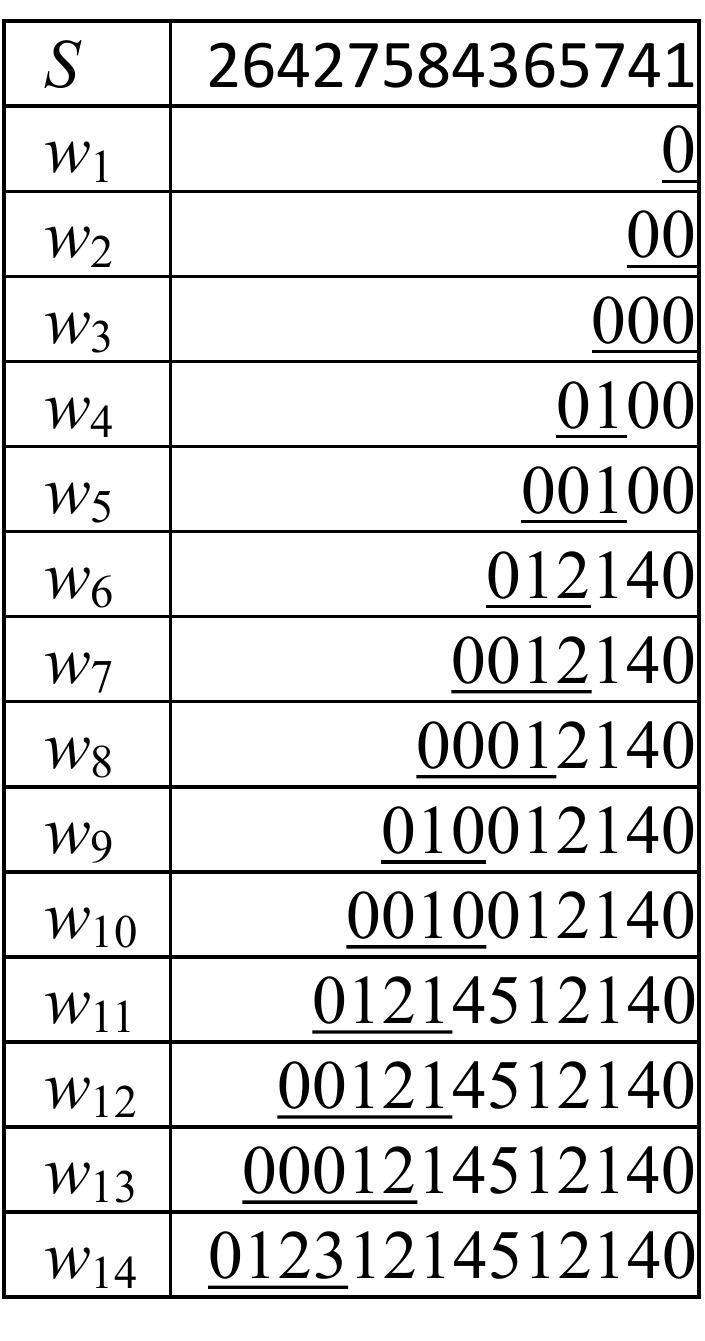}
  \caption{$\CPH(S)$ for string $S = \mathtt{26427584365741}$. For each $w_i = \PD(S[n-i+1..])$, the underlined prefix is the string that is represented by the node $u_i$ in $\CPH(S)$. The dotted arcs are reversed suffix links (not all reversed suffix links are shown).}
  \label{fig:CPH_string_example}
\end{wrapfigure}
In this section, we introduce our new indexing structure for Problem~\ref{prob:string}.
For a given text string $S$ of length $n$,
let $\mathcal{W}_S$ denote the sequence of
the parent distance encodings of the non-empty suffixes of $S$
which are sorted in increasing order of their lengths.
Namely,
$\mathcal{W}_S = \langle w_1$, \ldots, $w_n \rangle = \langle \PD(S[n..])$, \ldots, $\PD(S[1..]) \rangle$,
where $w_{n-i+1} = \PD(S[i..])$.
The \emph{Cartesian-tree Position Heap} (\emph{CPH}) of string $S$,
denoted $\CPH(S)$, is the sequence hash tree of $\mathcal{W}_S$,
that is, $\CPH(S) = \SHT(\mathcal{W}_S)$.
Note that for each $1 \leq i \leq n+1$,
$\CPH(S[i..]) = \SHT(\mathcal{W}_S)^{n-i+1}$ holds.

Our algorithm builds $\CPH(S[i..])$ for decreasing $i = n, \ldots, 1$,
which means that we process the given text string $S$ in a
right-to-left online manner,
by prepending the new character $S[i]$ to the current suffix $S[i+1..]$.

For a sequence $v$ of integers,
let $\Zero_v$ denote the sorted list of positions $z$ in $v$
such that $v[z] = 0$ iff $z \in \Zero_v$.
Clearly $|\Zero_v|$ is equal to the number of $0$'s in $v$.

\begin{lemma} \label{lem:number_of_zeros}
  For any string $S$, $|\Zero_{\PD(S)}| \leq \sigma_S$. 
\end{lemma}

\begin{proof}
Let $\Zero_{\PD(S)} = z_1, \ldots, z_{\ell}$.
We have that $S[z_{1}] > \cdots > S[z_{\ell}]$
since otherwise $\PD(S)[z_{x}] \neq 0$ for some $z_x$, a contradiction.
Thus $|\Zero_{\PD(S)}| \leq \sigma_S$ holds.
\end{proof}


\begin{lemma} \label{lem:online_PD}
  For each $i = n, \ldots, 1$,
  $\PD(S[i..])$ can be computed from $\PD(S[i+1..])$
  in an online manner, using a total of $O(n)$ time with $O(\sigma_S)$ working space.
\end{lemma}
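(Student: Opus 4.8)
The plan is to characterize exactly how the parent distance encoding changes when we prepend a single character $S[i]$ to the suffix $S[i+1..]$, and then to show that all these changes are realized by a single monotone stack whose total work, summed over $i = n, \ldots, 1$, telescopes to $O(n)$.

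First I would observe that prepending $S[i]$ cannot alter any \emph{nonzero} entry of $\PD(S[i+1..])$. Indeed, if position $p$ of $S[i+1..]$ already has a nearest left neighbor at distance $\PD(S[i+1..])[p] \neq 0$, that neighbor lies strictly to the right of the newly added character, so it remains the nearest $\le$-neighbor in $S[i..]$ and the entry is unchanged, up to the global index shift by one caused by prepending (a distance is translation-invariant). Hence the only entries that can change are the new leading entry, which is always $0$, and the positions in $\Zero_{\PD(S[i+1..])}$.

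Second, I would pin down which zero positions actually change. Let $\Zero_{\PD(S[i+1..])} = z_1, \ldots, z_\ell$ be listed by increasing position, and recall from the proof of Lemma~\ref{lem:number_of_zeros} that the corresponding characters strictly decrease. A zero position $z$ becomes nonzero in $S[i..]$ exactly when the prepended character is a valid $\le$-neighbor, i.e.\ $S[i] \le S[z]$; moreover, since every position strictly between $i$ and $z$ carries a character strictly larger than $S[z]$ (that is what makes $z$ a left-to-right strict minimum), the character $S[i]$ is in fact the \emph{nearest} such neighbor, so $z$'s new entry equals its distance to $i$. Because the characters at $z_1, \ldots, z_\ell$ strictly decrease, the condition $S[i] \le S[z]$ holds precisely on a prefix $z_1, \ldots, z_k$; these become nonzero, while $z_{k+1}, \ldots, z_\ell$, together with the new position $i$ prepended at the front, constitute $\Zero_{\PD(S[i..])}$.

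This structure is exactly maintained by a stack $Z$ holding the current zero positions, oriented so that its top is the position with the largest character (equivalently the front $z_1$ of $\Zero$); the invariant, proved by induction, is that the characters strictly decrease from the top to the bottom of $Z$. To process $S[i]$ I would repeatedly pop the top while its character is $\ge S[i]$, assigning to each popped position its now-fixed parent distance, and then push $i$. Correctness follows from the previous paragraph, and the monotonicity invariant is preserved because every element left below $i$ has a strictly smaller character. For the running time, the key amortized observation is that once a position acquires a $\le$-neighbor it keeps it under any further leftward extension, so each position is pushed once and popped at most once across the whole scan; the total number of stack operations is therefore $O(n)$. Finally, the stack never holds more than $|\Zero_{\PD(S[i..])}| \le \sigma_{S[i..]} \le \sigma_S$ entries by Lemma~\ref{lem:number_of_zeros}, giving the claimed $O(\sigma_S)$ working space. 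The main obstacle is the second step---establishing that the changed entries are precisely a monotone prefix of the zero positions and that the prepended character is their nearest neighbor---since the online stack update and its amortized analysis then follow routinely.
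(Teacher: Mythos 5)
Your proof is correct and takes essentially the same approach as the paper's: both maintain the list $\Zero_{\PD(S[i+1..])}$ of zero positions (your monotone stack, strictly decreasing in character by the argument of Lemma~\ref{lem:number_of_zeros}), and on prepending $S[i]$ both fix the entries of exactly the prefix of that list whose characters are $\geq S[i]$, push $i$, and amortize the total work to $O(n)$ because a position once assigned a nonzero value is never touched again, with $O(\sigma_S)$ space by Lemma~\ref{lem:number_of_zeros}. The additional justification you supply---that each zero position is a prefix minimum of the suffix, so the prepended character, when $\leq$ it, is its \emph{nearest} left neighbor---is left implicit in the paper but is the right supporting observation.
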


\begin{proof}
  Given a new character $S[i]$,
  we check each position $z$ in the list $\Zero_{\PD(S[i+1..])}$ in increasing order.
  Let $\hat{z} = z+i$, i.e.,
  $\hat{z}$ is the global position in $S$ corresponding to $z$ in $S[i+1..]$.
  If $S[i] \leq S[\hat{z}]$,
  then we set $\PD(S[i..])[z-i+1] = z-i~(> 0)$
  and remove $z$ from the list.
  Remark that these removed positions
  correspond to the front pointers in the next suffix $S[i..]$.
  We stop when we encounter the first $z$ in the list such that $S[i] > S[\hat{z}]$.
  Finally we add the position $i$ to the head of the remaining positions in
  the list. This gives us $\Zero_{\PD(S[i..])}$
  for the next suffix $S[i..]$.

  It is clear that once a position in the PD encoding
  is assigned a non-zero value, then the value never changes
  whatever characters we prepend to the string.
  Therefore, we can compute $\PD(S[i..])$ from $\PD(S[i+1..])$
  in a total of $O(n)$ time for every $1 \leq i \leq n$.
  The working space is $O(\sigma_S)$ due to Lemma~\ref{lem:number_of_zeros}.
\end{proof}

A position $i$ in a sequence $u$ of non-negative
integers is said to be a \emph{front pointer}
in $u$ if $i - u[i] = 1$
\sinote*{added}{and $i \geq 2$.}
Let $\Front_u$ denote the sorted list of front pointers in $u$.
\annote*{added}{%
	For example, if $u = 01214501$, then $\Front_u = \{2,3,5,6\}$. 
}%
The positions of the suffix $S[i+1..]$ which are removed
from $\Zero_{\PD(S[i+1..])}$ correspond to the front pointers
in $\Front_{\PD(S[i..])}$ for the next suffix $S[i..]$.

Our construction algorithm updates 
$\CPH(S[i+1..])$ to $\CPH(S[i..])$ by inserting a new node for the
next suffix $S[i..]$,
processing the given string $S$ in a right-to-left online manner.
Here the task is to efficiently locate the parent of the new node in the current CPH at each iteration.

As in the previous work on right-to-left online construction of
indexing structures for other types of pattern matching~\cite{Weiner,ehrenfeucht_position_heaps_2011,FujisatoNIBT18,FujisatoNIBT19b},
we use the \emph{reversed suffix links} in our construction algorithm for $\CPH(S)$.
For ease of explanation, we first introduce the notion of the \emph{suffix links}.
Let $u$ be any non-root node of $\CPH(S)$.
We identify $u$ with the path label from the root of $\CPH(S)$ to $u$,
so that $u$ is a PD encoding of some substring of $S$.
We define the suffix link of $u$, denoted $\slink(u)$,
such that $\slink(u) = v$ iff $v$ is obtained by
(1) removing the first $0$~($= u[1]$), and
(2) substituting $0$ for the character $u[f]$ at every front pointer $f \in \Front_u \subseteq [2..|u|]$ of $u$.
The reversed suffix link of $v$ with non-negative integer label $a$,
denoted $\rslink(v, a)$, is defined such that
$\rslink(v, a) = u$ iff $\slink(u) = v$ and $a = |\Front_u|$.
See also Figure~\ref{fig:CPH_string_example}.

\begin{lemma} \label{lem:upper_bound_a}
  Let $u, v$ be any nodes of $\CPH(S)$ such that 
  $\rslink(v, a) = u$ with label $a$.
  Then $a \leq \sigma_S$.
\end{lemma}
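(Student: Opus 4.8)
The plan is to bound $a = |\Front_u|$ by connecting front pointers of $u$ to the structure of the underlying string. Recall that $u$ is the PD encoding of some substring of $S$, and that $\rslink(v,a) = u$ means $\slink(u) = v$ where $v$ is obtained from $u$ by deleting $u[1]$ (a leading $0$) and zeroing out the entries at every front pointer. The key observation I would pursue is that a front pointer $f \in \Front_u$ is a position with $u[f] = f - 1$, i.e.\ the nearest left-neighbor with a value $\leq u[f]$ in the original string sits at position $1$. Intuitively, prepending a single new character at the front turns exactly these positions into new zeros in the shorter suffix, so they are the positions whose ``support'' is the very first character.

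First I would make precise the correspondence between $u$ and a concrete substring. Write $u = \PD(W)$ for the relevant substring $W$ of $S$ with $|W| = |u|$. By the definition of the parent distance encoding, position $f$ is a front pointer exactly when the nearest position $j < f$ with $W[j] \leq W[f]$ is $j = 1$; equivalently, $W[1] \leq W[f]$ and $W[k] > W[f]$ for all $1 < k < f$. The plan is to show that the characters $W[f]$ sitting at the front pointers $f \in \Front_u$ are strictly decreasing as $f$ increases.

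The main step, then, is the monotonicity argument. Suppose $f < f'$ are two front pointers. Since $f'$ is a front pointer, every position strictly between $1$ and $f'$ holds a value strictly greater than $W[f']$; in particular $f$ lies in that range, so $W[f] > W[f']$. Hence the values $W[f]$ over the front pointers $f \in \Front_u$ form a strictly decreasing sequence, which means they are all distinct. Therefore the number of front pointers is at most the number of distinct characters appearing in $W$, which is at most $\sigma_S$. This yields $a = |\Front_u| \leq \sigma_S$.

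I expect the main obstacle to be the bookkeeping of the indexing conventions rather than the combinatorial core: one must be careful that the definition of front pointer here ($i - u[i] = 1$ with $i \geq 2$) matches the PD condition ``nearest left-neighbor with value $\leq W[i]$ is at position $1$,'' and that this is exactly the set of positions converted to zeros when the leading character is stripped by the suffix-link operation. Once that alignment is verified, the strictly-decreasing-values argument is short and mirrors the proof of Lemma~\ref{lem:number_of_zeros}, which bounds the number of zeros by $\sigma_S$ via the same strict-decrease phenomenon; indeed, the front pointers of $u$ become precisely the new zeros of $\slink(u)$, so this lemma can be viewed as Lemma~\ref{lem:number_of_zeros} applied to the shortened string.
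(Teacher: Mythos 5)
Your proposal is correct and is essentially the paper's argument: the paper proves the lemma in one line by noting $|\Front_u| \leq |\Zero_v|$ and invoking Lemma~\ref{lem:number_of_zeros} on the substring $S'$ with $\PD(S') = v$, which is exactly the reduction you state in your closing paragraph. Your main text merely inlines that reduction, re-running the strict-decrease argument from the proof of Lemma~\ref{lem:number_of_zeros} directly at the front-pointer positions of the substring $W$ with $\PD(W) = u$ (using that $f \geq 2$ and that $f'$ being a front pointer forces $W[f] > W[f']$ for $f < f'$), and both routes are sound.
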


\begin{proof}
  Since $|\Front_u| \leq |\Zero_v|$,
  using Lemma~\ref{lem:number_of_zeros},
  we obtain $a = |\Front_u| \leq |\Zero_v| \leq \sigma_{S'} \leq \sigma_S$,
  where $S'$ is a substring of $S$ such that $\PD(S') = v$.
  \qed
\end{proof}

The next lemma shows that the number of
out-going reversed suffix links of each node $v$ is bounded by the alphabet size.



\begin{figure}[tbh]
  \centering
    \includegraphics[scale=0.4]{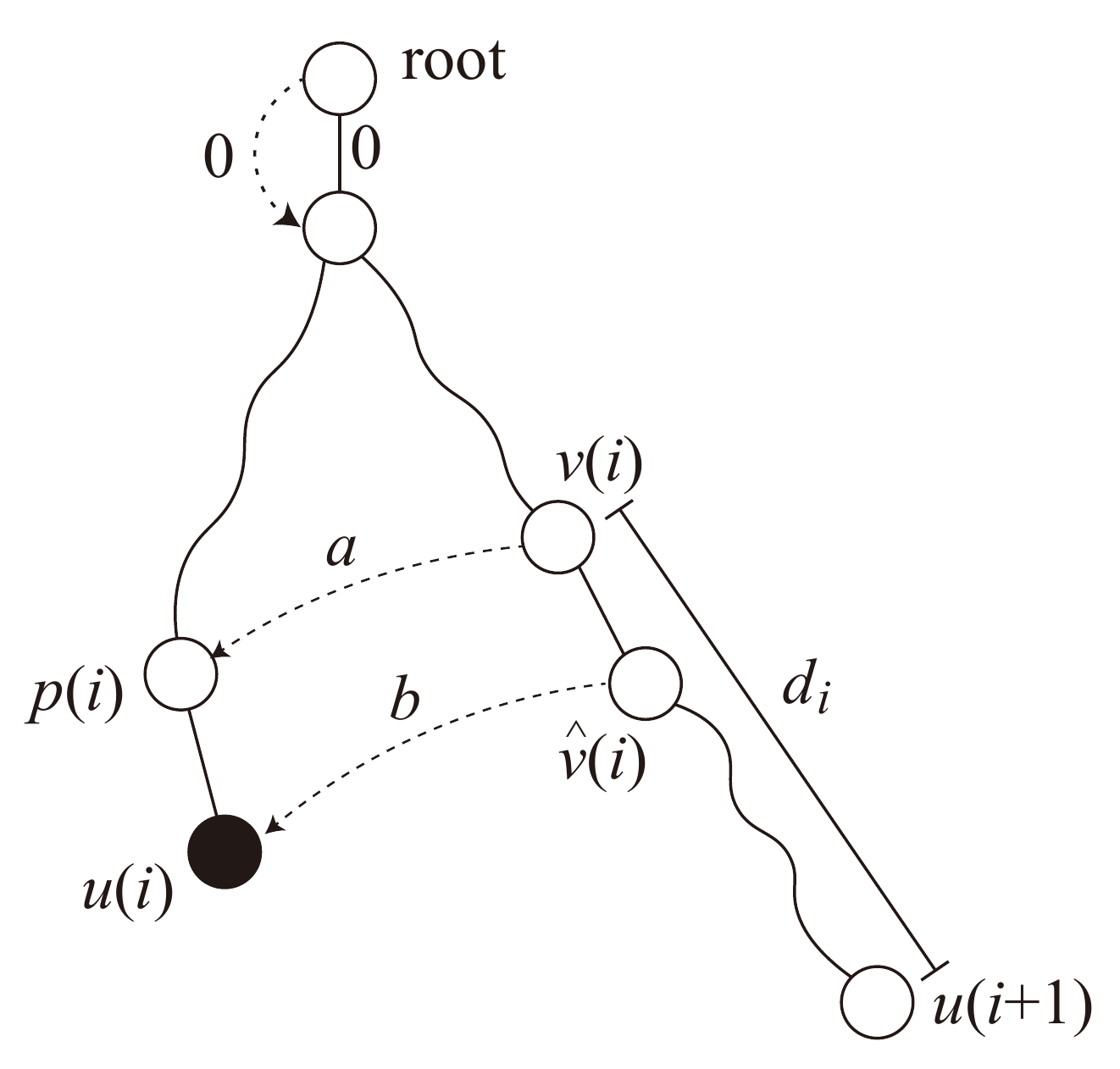}
  \caption{We climb up the path from $u(i+1)$ and find the parent $p(i)$ of the new node $u(i)$ (in black). The label $a$ of the reversed suffix link we traverse from $v(i)$ is equal to the number of front pointers in $p(i)$.}
  \label{fig:CPH_construction_string}
\end{figure}

Our CPH construction algorithm makes use of the following monotonicity of the labels of reversed suffix links:
\begin{lemma} \label{lem:monotonicity}
  Suppose that there exist two reversed suffix links
  $\rslink(v, a) = u$ and $\rslink(v', a') = u'$
  such that $v' = \parent(v)$ and $u' = \parent(u)$.
  Then, $0 \leq a - a' \leq 1$.
\end{lemma}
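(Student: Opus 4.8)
The plan is to reduce the claim to a direct comparison between $|\Front_u|$ and $|\Front_{u'}|$. By the definition of reversed suffix links we have $a = |\Front_u|$ and $a' = |\Front_{u'}|$, so the assertion $0 \le a - a' \le 1$ is equivalent to showing that the number of front pointers of $u$ exceeds that of its parent $u'$ by exactly $0$ or $1$. Note that the hypothesis $v' = \parent(v)$ plays no role in the inequality itself; only $u' = \parent(u)$ will enter the argument.

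First I would use the fact that $\CPH(S)$ is a sequence hash tree, hence a trie: identifying each node with its path label, the label of $u' = \parent(u)$ is precisely the label of $u$ with its last symbol deleted. Thus $u' = u[1..|u|-1]$ and $u'[f] = u[f]$ for every $1 \le f \le |u|-1$. This is the only structural input needed, and it is immediate from the trie property.

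Next I would compare the two front-pointer sets by exploiting the locality of the defining condition. A position $f$ is a front pointer of a sequence $w$ exactly when $f \ge 2$ and $f - w[f] = 1$; since $u$ and $u'$ agree on all positions $1, \ldots, |u|-1$, such an $f \le |u|-1$ is a front pointer of $u'$ if and only if it is a front pointer of $u$. Consequently $\Front_{u'} = \Front_u \cap [2..|u|-1]$, and because every front pointer of $u$ already lies in $[2..|u|]$, the only element that can belong to $\Front_u$ but not to $\Front_{u'}$ is the last position $|u|$ itself.

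The conclusion then splits into two cases according to whether $|u|$ is a front pointer of $u$, equivalently whether $u[|u|] = |u| - 1$ (i.e. whether the edge label from $u'$ to $u$ points back to the first position). If it is, then $a' = a - 1$; otherwise $a' = a$. In either case $a - a' \in \{0,1\}$, as desired. I do not expect a genuine obstacle here: the whole argument rests on the observation that prepending effectively appends a single symbol to a prefix, which can create at most one new front pointer and can destroy none. The one point worth stating carefully is precisely that restricting a sequence to a prefix leaves all of its earlier front pointers intact, which is what pins down $\Front_{u'} = \Front_u \cap [2..|u|-1]$.
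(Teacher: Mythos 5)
Your proof is correct and follows the same route as the paper: the paper's own proof simply states that the claim ``immediately follows from $a = |\Front_u|$, $a' = |\Front_{u'}|$, and $u' = u[1..|u|-1]$,'' which is exactly the argument you spell out. Your expansion---that the front-pointer condition is local, so $\Front_{u'} = \Front_u \cap [2..|u|-1]$ and only position $|u|$ can be gained---is just the detailed version of what the paper leaves implicit.
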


\begin{proof}
  Immediately follows from
  $a = |\Front_{u}|$, $a' = |\Front_{u'}|$,
  and $u' = u[1..|u|-1]$.
\end{proof}

We are ready to design our right-to-left online construction
algorithm for the CPH of a given string $S$.
Since $\PD(S[i..])$ is the $(n-i+1)$-th string $w_{n-i+1}$ of
the input sequence $\mathcal{W}_S$,
for ease of explanation, we will use the convention that
$u(i) = u_{n-i+1}$ and $p(i) = p_{n-i-1}$,
where the new node $u(i)$ for $w_{n-i+1} = \PD(S[i..])$
is inserted as a child of $p(i)$.
See Figure~\ref{fig:CPH_construction_string}.

\vspace*{2mm}
\begin{breakbox}
  \noindent Algorithm 1: Right-to-Left Online Construction of $\CPH(S)$
  \begin{description}
  \item[$i = n$ (base case):] We begin with $\CPH(S[n..])$
  which consists of the root $r = u(n+1)$ and the node $u(n)$
  for the first (i.e. shortest) suffix $S[n..]$ of $S$.
  Since $w_1 = \PD(S[n..]) = \PD(S[n]) = 0$,
  the edge from $r$ to $u(n)$ is labeled $0$.
  Also, we set the reversed suffix link $\rslink(r, 0) = u(n)$.

  \item[$i = n-1, \ldots, 1$ (iteration):]
  Given $\CPH(S[i+1..])$ which consists 
  of the nodes $u(i+1), \ldots, u(n)$,
  which respectively represent some prefixes of
  the already processed strings $w_{n-i}, \ldots, w_1 = \PD(S[i+1..]), \ldots, \PD(S[n..])$,
  together with their reversed suffix links.
  We find the parent $p(i)$ of the new node $u(i)$ for $\PD(S[i..])$,
  as follows:
  We start from the last-created node $u(i+1)$ for the previous $\PD(S[i+1..])$,
  and climb up the path towards the root $r$.
  Let $d_i \in [1..|u(i+1)|]$ be the smallest integer
  such that the $d_i$-th ancestor $v(i) = \anc(u(i+1), d_i)$ of $u(i+1)$
  has the reversed suffix link $\rslink(v(i), a)$
  with the label $a = |\Front_{\PD(S[i..i+|v(i)|])}|$.
  We traverse the reversed suffix link
  from $v(i)$ and let $p(i) = \rslink(v(i), a)$.
  We then insert the new node $u(i)$ as the new child of $p(i)$,
  with the edge labeled $\PD(S[i..])[i+|u(i)|-1]$.
  Finally, we create a new reversed suffix link $\rslink(\hat{v}(i), b) = u(i)$,
  where $\hat{v}(i) = \anc(u(i+1), d_i-1)$ and $\parent(\hat{v}) = v$.
  We set $b \leftarrow a+1$ if the position $i+|p(i)|$ is a front pointer of $\PD(S[i..])$,
  and $b \leftarrow a$ otherwise.
  \end{description}
\end{breakbox}

For computing the label $a = |\Front_{\PD(S[i..i+|v(i)|])}|$ efficiently,
we introduce a new encoding $\FP$ that is defined as follows:
For any string $S$ of length $n$, let
\sinote*{fixed}{%
  $\FP(S)[i] = |\Front_{\PD(S[i..n])}|$.
}%
The $\FP$ encoding preserves the ct-matching equivalence:

\begin{lemma} \label{lem:FP_PD}
  For any two strings $S_1$ and $S_2$,
  $S_1 \approx S_2$ iff $\FP(S_1) = \FP(S_2)$. 
\end{lemma}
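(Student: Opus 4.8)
The plan is to prove the two implications separately, after first reducing to the case $|S_1| = |S_2|$: if the lengths differ then neither $S_1 \approx S_2$ (which forces equal-length $\PD$ encodings) nor $\FP(S_1) = \FP(S_2)$ (equal-length integer sequences) can hold, so the equivalence is vacuously true. Write $n$ for the common length.

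For the forward implication I would use that ct-matching is a substring-consistent equivalence relation. Assuming $S_1 \approx S_2$, the SCER property gives $S_1[i..n] \approx S_2[i..n]$ for every $i$, and hence $\PD(S_1[i..n]) = \PD(S_2[i..n])$ by Lemma~\ref{lem:CT-PD}. Since the front-pointer list $\Front_v$ of a sequence $v$ is determined by $v$ alone (it is $\{j \ge 2 : j - v[j] = 1\}$), the two suffixes share the same front-pointer list, so $\FP(S_1)[i] = |\Front_{\PD(S_1[i..n])}| = |\Front_{\PD(S_2[i..n])}| = \FP(S_2)[i]$ for all $i$, i.e. $\FP(S_1) = \FP(S_2)$. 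This direction is routine.

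The reverse implication is the heart of the lemma, and the hard part will be to argue that $\FP(S)$ already determines $\PD(S)$. My approach is to observe that $\FP(S)$ encodes exactly the data-dependent choices made by the right-to-left incremental computation of Lemma~\ref{lem:online_PD}. Recall from the remark following that lemma that the zeros of $\PD(S[i+1..])$ that get removed when $S[i]$ is prepended are precisely the front pointers of $\PD(S[i..])$; hence their number is exactly $\FP(S)[i]$. Moreover these removed positions are always a prefix of the position-sorted zero list $\Zero_{\PD(S[i+1..])}$, because the zero positions carry strictly decreasing values (as in the proof of Lemma~\ref{lem:number_of_zeros}), so the test $S[i] \le S[\hat z]$ succeeds exactly on an initial segment. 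I would therefore define a purely combinatorial process driven by the integer sequence $\FP(S)$: maintain a list $L$ of position indices (initially empty) and an output array $D$; for $i = n, \ldots, 1$, remove the first $\FP(S)[i]$ entries of $L$, assign $D[p] = p - i$ to each removed index $p$, and then prepend $i$ to $L$; finally set $D[p] = 0$ for every index still in $L$.

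The key point, which I would prove by downward induction on $i$ alongside Lemma~\ref{lem:online_PD}, is that this process reconstructs $\PD(S[i..])$ at every step and hence $D = \PD(S)$ at the end. Crucially, the only quantities the process consumes are the counts $\FP(S)[i]$, and the value $p - i$ assigned to an absorbed index depends only on position indices, never on the actual characters of $S$; everything else (which indices sit in $L$, and in what order) is a deterministic function of the counts processed so far. Consequently the entire computation is a function of $\FP(S)$ alone, so $\FP(S_1) = \FP(S_2)$ forces $\PD(S_1) = \PD(S_2)$, and $S_1 \approx S_2$ follows by Lemma~\ref{lem:CT-PD}. The main obstacle is precisely this reconstruction claim; once one is convinced that the absorbed positions form a prefix of the zero list and that the assigned distances are position-determined rather than character-determined, the rest is bookkeeping.
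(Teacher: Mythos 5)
Your proof is correct and takes essentially the same route as the paper: the paper encodes $\PD$ as a DAG of parent pointers, observes that $\FP(S)[i]$ is the in-degree of node $i$, and reconstructs the DAG from $\FP$ by processing $i = n, \ldots, 1$ and greedily giving node $i$ the leftmost still-unmarked positions to its right --- which is exactly your list process, with your assignment $D[p] = p - i$ playing the role of the edge $(i, p)$. If anything, your justification of the greedy step --- that the absorbed zeros form a prefix of the sorted zero list because the values at zero positions strictly decrease (as in Lemma~\ref{lem:number_of_zeros}), with the count pinned to $\FP(S)[i]$ via the remark after Lemma~\ref{lem:online_PD} --- makes explicit what the paper asserts only as ``the resulting DAG is clearly unique.''
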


\begin{proof}
  For a string $S$,
  consider the DAG $\mathsf{G}(S) = (V, E)$
  such that $V = \{1, \ldots, |S|\}$,
  $E = \{(j, i) \mid j = i - \PD(S)[i] \}$
  (see also Figure~\ref{fig:DAG} in Appendix).
  By Lemma~\ref{lem:CT-PD}, for any strings $S_1$ and $S_2$,
  $\mathsf{G}(S_1) = \mathsf{G}(S_2)$ iff $S_1 \approx S_2$.
  Now, we will show there is a one-to-one correspondence
  between the DAG $\mathsf{G}$ and the $\FP$ encoding.

  ($\Rightarrow$)
  We are given $\mathsf{G}(S)$ for some (unknown) string $S$.
  Since $\FP(S)[i]$ is the in-degree of the node $i$ of $\mathsf{G}(S)$,
  $\FP(S)$ is unique for the given DAG $\mathsf{G}(S)$.

  ($\Leftarrow$)
  Given $\FP(S)$ for some (unknown) string $S$,
  we show an algorithm that builds DAG $\mathsf{G}(S)$.
  We first create nodes $V = \{1, \ldots, |S|\}$ without edges,
  where all nodes in $V$ are initially unmarked.
  For each $i = n, \ldots, 1$ in decreasing order,
  if $\FP(S)[i] > 0$,
  then select the leftmost $\FP(S)[i]$ unmarked nodes in the range $[i-1..n]$,
  and create an edge $(i, i')$ from each selected node $i'$ to $i$.
  We mark all these $\FP(S)[i]$ nodes at the end of this step,
  and proceed to the next node $i-1$.
  The resulting DAG $\mathsf{G}(S)$ is clearly unique for a given $\PD(S)$.    
\end{proof}

\sinote*{added}{%
  For computing the label $a = |\Front_{\PD(S[i..i+|v(i)|])}| = \FP(S[i..i+|v(i)|])[1]$
  of the reversed suffix link in Algorithm 1,
  it is sufficient to maintain the induced graph $\mathsf{G}_{[i..j]}$
  of DAG $\mathsf{G}$
  for a variable-length sliding window $S[i..j]$ with the nodes $\{i, \ldots, j\}$.
  This can easily be maintained in $O(n)$ total time.
}%

\begin{theorem} \label{lem:construction_cph_string}
  Algorithm 1 builds $\CPH(S[i..])$ for decreasing $i = n, \ldots, 1$ in
  a total of $O(n \log \sigma)$ time and $O(n)$ space,
  where $\sigma$ is the alphabet size.
\end{theorem}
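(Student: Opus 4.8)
The plan is to establish two things: that Algorithm~1 correctly produces $\CPH(S[i..])$ at each step, and that the claimed time and space bounds hold. I would treat correctness first, then analyze the cost by an amortization over the climb-up steps. For correctness, the key is the semantic reading of the suffix links. If a node $v$ of the current heap is identified with the prefix $\PD(S[i+1..i+|v|])$ of $\PD(S[i+1..])$, then the defining operation of $\slink$ (delete the first entry, zero out the front pointers) realizes exactly the relation $\slink(\PD(S[i..j])) = \PD(S[i+1..j])$; consequently, traversing $\rslink(v,a)$ with the correct label $a = |\Front_{\PD(S[i..i+|v|])}|$ prepends the new character $S[i]$ and lands on the node representing the prefix $\PD(S[i..i+|v|])$ of $\PD(S[i..])$. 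I would then argue that the shallowest ancestor $v(i)$ of $u(i+1)$ possessing such a reversed suffix link yields, via $p(i) = \rslink(v(i),a)$, exactly the longest prefix of $\PD(S[i..])$ already present in $\CPH(S[i+1..])$, which by the sequence-hash-tree definition is the required parent of the new node $u(i)$. Lemma~\ref{lem:monotonicity} is what makes the ``shallowest ancestor'' rule sound: since the labels along a parent chain decrease by at most one per step, the target label is met at a well-defined ancestor and no shallower valid candidate is skipped.

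For the time bound, the cost of iteration $i$ is dominated by the climb from $u(i+1)$ through its $d_i$ ancestors. I would first record the depth identity: since $\slink$ shortens a node by one, $\rslink$ lengthens it by one, so $\depth(p(i)) = \depth(v(i)) + 1$, and because $u(i)$ is inserted as a child of $p(i)$ we get $\depth(u(i)) = \depth(v(i)) + 2 = \depth(u(i+1)) - d_i + 2$. Hence $d_i = \depth(u(i+1)) - \depth(u(i)) + 2$, and summing over the iterations telescopes:
\[
\sum_{i} d_i = \depth(u(n)) - \depth(u(1)) + 2(n-1) = O(n).
\]
Each climb step performs an $O(1)$-time ancestor query~\cite{BenderF04}, an $O(1)$-amortized recomputation of the label $a = \FP(S[i..i+|v|])[1]$ via the sliding-window maintenance of $\mathsf{G}$ described after Lemma~\ref{lem:FP_PD} (shrinking the window from the right costs $O(1)$ per removed position, $O(n)$ over the whole run), and one reversed-suffix-link lookup. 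Storing the reversed suffix links leaving each node in a balanced search tree keyed by their labels, which by Lemma~\ref{lem:upper_bound_a} lie in $[0..\sigma_S]$, makes each lookup cost $O(\log\sigma)$. Multiplying the $O(n)$ total steps by the per-step $O(\log\sigma)$ gives the claimed $O(n\log\sigma)$ running time.

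For space, the heap has $n+1$ nodes; each non-root node carries a single suffix link, so there are $O(n)$ reversed suffix links in total, and the per-node balanced trees holding them sum to $O(n)$ space (this is precisely why a balanced tree, rather than a $\sigma$-sized array per node, is used). The sliding-window graph uses $O(\sigma_S)$ words by Lemma~\ref{lem:number_of_zeros}, and the ancestor structure is linear, so the entire construction runs in $O(n)$ space.

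I expect the main obstacle to be the correctness argument, specifically proving that the computed $p(i)$ equals the true longest existing prefix of $\PD(S[i..])$: this requires combining the suffix-link semantics with Lemma~\ref{lem:monotonicity} carefully, together with attention to the boundary cases (when $v(i)$, $p(i)$, or $u(i+1)$ is the root, or when the sought label $a$ first appears only at the root). The amortized identity $d_i = \depth(u(i+1)) - \depth(u(i)) + 2$ is the other point to be verified against these boundary situations, but once it holds the telescoping bound, and hence the $O(n\log\sigma)$ time, is immediate.
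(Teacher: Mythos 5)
Your proposal is correct and follows essentially the same route as the paper's proof: correctness via the reversed-suffix-link semantics combined with Lemma~\ref{lem:monotonicity}, the telescoping identity $d_i = \depth(u(i+1)) - \depth(u(i)) + 2$ bounding the total climb by $O(n)$, per-node balanced search trees over the reversed-suffix-link labels (which lie in $[0..\sigma_S]$ by Lemma~\ref{lem:upper_bound_a}) for the $O(\log\sigma)$ lookup cost, and the sliding-window $\FP$ maintenance for computing the labels $a$. One wording slip worth fixing: $v(i)$ is the \emph{deepest} ancestor of $u(i+1)$ with the required reversed suffix link (the first one met when climbing, i.e., smallest $d_i$), not the shallowest --- read literally, the root always qualifies via $\rslink(r,0)$, so ``shallowest'' would always return the root --- but your depth identity and telescoping computation make clear the deepest-first interpretation was intended.
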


\begin{proof}
  \textbf{Correctness:}
  Consider the $(n-i+1)$-th step in which we process $\PD(S[i..])$.
  By Lemma~\ref{lem:monotonicity},
  the $d_i$-th ancestor $v(i) = \anc(u(i+1), d_i)$ of $u(i+1)$
  can be found by simply walking up the path from the start node $u(i+1)$.
  Note that there always exists such ancestor $v(i)$ of $u(i+1)$ since
  the root $r$ has the defined reversed suffix link $\rslink(r, 0) = 0$.
  By the definition of $v(i)$ and its reversed suffix link,
  $\rslink(v(i), a) = p(i)$ is the longest prefix of $\PD(S[i..])$
  that is represented by $\CPH(S[i+1..])$ (see also Figure~\ref{fig:CPH_construction_string}).
  Thus, $p(i)$ is the parent of the new node $u(i)$ for $\PD(S[i..])$.
  The correctness of the new reversed suffix link $\rslink(\hat{v}(i),b) = u(i)$
  follows from the definition.

  \noindent \textbf{Complexity:}
  The time complexity is proportional to 
  the total number $\sum_{i = 1}^{n} d_i$ of nodes that we visit for all $i = n, \ldots, 1$.
  Clearly $|u(i)| - |u(i+1)| = d_i-2$.
  Thus, $\sum_{i = 1}^{n} d_i = \sum_{i = 1}^{n}(|u(i)| - |u(i+1)|+2) = |u(1)| - |u(n)| + 2n \leq 3n = O(n)$.
  Using Lemma~\ref{lem:rsl_bounds} and sliding-window $\FP$,
  we can find the reversed suffix links
  in $O(\log \sigma_S)$ time at each of the $\sum_{i = 1}^{n} d_i$
  visited nodes.
  Thus the total time complexity is $O(n \log \sigma_S)$.
  Since the number of nodes in $\CPH(S)$ is $n+1$
  and the number of reversed suffix links is $n$,
  the total space complexity is $O(n)$.
\end{proof}
  
\begin{lemma} \label{lem:many_branches}
  There exists a string $S$ of length $n$
  over a binary alphabet $\Sigma = \{\mathtt{1, 2}\}$
  such a node in $\CPH(S)$ has $\Omega(\sqrt{n})$ out-going edges.
\end{lemma}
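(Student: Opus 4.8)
The plan is to exhibit an explicit binary string whose CPH has a single node with $\Omega(\sqrt{n})$ children. Take $d$ with $n = (d+1)^2$ and set $S = B_0 B_1 \cdots B_d$, where $B_a = \mathtt{1}^{a}\mathtt{2}^{d-a}\mathtt{1}$ for $0 \le a \le d$. Each block has length $d+1$, so $|S| = (d+1)^2 = n$ and $d+1 = \sqrt{n}$. (For a general length $n$ I would pad $S$ on the right with copies of $\mathtt{2}$; this leaves every existing PD-prefix unchanged and only reinforces the structure used below.) The target node will be $v = \mathtt{0}\mathtt{1}^{d-1}$, i.e.\ the depth-$d$ node whose path label is a $0$ followed by $d-1$ ones.

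The first step is a PD computation showing that $v$ receives $d+1$ pairwise distinct child labels. For any $a$, the length-$d$ prefix of the suffix of $S$ beginning at block $B_a$ is the non-decreasing string $\mathtt{1}^{a}\mathtt{2}^{d-a}$; since a non-decreasing binary string has no descent $\mathtt{2}\mathtt{1}$, its parent-distance encoding is exactly $0\,1^{d-1}$, so every such suffix passes through $v$. I then read off the $(d{+}1)$-st PD value of that suffix: the character at that position is the trailing $\mathtt{1}$ of $B_a$, so for $a \ge 1$ its value is the distance back to the block's first $\mathtt{1}$, namely $d+1-a$, while for $a = 0$ (block $\mathtt{2}^{d}\mathtt{1}$) there is no earlier $\mathtt{1}$ and the value is $0$. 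As $a$ ranges over $0,\ldots,d$ these values run over $0,1,\ldots,d$, giving $d+1$ distinct candidate child labels for $v$.

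The second, and main, step is to verify that the sequence-hash-tree construction actually realizes all $d+1$ of these as genuine out-going edges of the \emph{single} node $v$, rather than collapsing some of them at shallower nodes. I would first observe that the spine $\mathtt{0}, \mathtt{01}, \ldots, \mathtt{0}\mathtt{1}^{d-1} = v$ is created early, by the short suffixes $\mathtt{1}^{k}$ ($1 \le k \le d$) supplied by the trailing run $B_d = \mathtt{1}^{d+1}$, whose encodings are $0\,1^{k-1}$; these have length at most $d+1$ and are therefore processed before every branching suffix (the block-start suffix for $a < d$ has length $(d+1)(d+1-a) \ge 2(d+1)$). Consequently, when a block-start suffix is inserted, $v$ is already present, so its longest present PD-prefix has length at least $d$. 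That prefix is either $v$ itself, in which case $\SHT$ appends the next PD character $c$ and creates the child $vc$, or it is already a proper extension of $vc$, in which case $vc$ is present a fortiori. Since the values $c$ computed above are pairwise distinct, the nodes $vc$ for $c \in \{0,1,\ldots,d\}$ are pairwise distinct direct children of $v$, so $v$ has at least $d+1$ out-going edges.

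Combining the two steps, $\CPH(S)$ has a node with $d+1 = \Omega(\sqrt{n})$ out-going edges. I expect the PD computations of the first step to be routine; the delicate point is the second step, namely arguing that the distinct continuation values survive as distinct children of one node. The clean way to handle it is exactly the spine-prebuilding observation: once $v$ is guaranteed present before the branching suffixes arrive, the incremental insertion rule of $\SHT$ forces each distinct next-character to create (or confirm) its own child of $v$, so no branch can be absorbed higher up the tree.
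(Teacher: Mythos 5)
Your proposal is correct and takes essentially the same route as the paper, whose proof uses the closely related witness $S=\mathtt{1}\cdot\mathtt{121}\cdot\mathtt{1221}\cdots\mathtt{1}\mathtt{2}^{k}\mathtt{1}$ and notes that the nodes $01^{k-2}\ell$ exist for all $1\le \ell\le k$, so that the node $01^{k-2}$ has $k=\Theta(\sqrt{n})$ out-going edges---precisely the mechanism of your uniform blocks $B_a=\mathtt{1}^{a}\mathtt{2}^{d-a}\mathtt{1}$ forcing $d+1$ distinct parent-distance continuations at $v=01^{d-1}$. Your second step, pre-building the spine $0,01,\ldots,01^{d-1}$ from the short suffixes of the trailing $\mathtt{1}$-run and then arguing that each block-start suffix either creates or already finds the child $vc$ (note the longest present prefix may also equal $vc$ exactly, which is equally harmless), is a sound and more careful verification of the insertion-order dynamics that the paper compresses into a reference to a figure.
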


\begin{proof}
  Consider string $S = \mathtt{1}\mathtt{121}\mathtt{1221} \cdots \mathtt{1}\mathtt{2}^k\mathtt{1}$.
  Then, for any $1 \leq \ell \leq k$,
  there exist nodes representing $01^{k-2}\ell$
  (see also Figure~\ref{fig:lower_bound_branches} in Appendix).
  Since $k = \Theta(\sqrt{n})$,
  the parent node $01^{k-2}$ has $\Omega(\sqrt{n})$ out-going edges.
\end{proof}

Due to Lemma~\ref{lem:many_branches},
if we maintain a sorted list of out-going edges for each node
during our online construction of $\CPH(S[i..])$,
it would require $O(n \log n)$ time even for a constant-size alphabet.
Still, after $\CPH(S)$ has been constructed,
we can sort all the edges offline, as follows:

\begin{theorem}
  For any string $S$ over an integer alphabet $\Sigma = [1..\sigma]$
  of size $\sigma = n^{O(1)}$,
  the edge-sorted $\CPH(S)$ together with
  the maximal reach pointers
  can be computed in $O(n \log \sigma_S)$ time and $O(n)$ space.
\end{theorem}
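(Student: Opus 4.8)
The plan is to decouple the task into three phases that are carried out one after another: first build the (edge-unsorted) $\CPH(S)$ together with its suffix links, then sort all edges by an integer sort, and finally install the maximal reach pointers by a single amortized-linear traversal guided by the suffix links. I would start by running Algorithm~1 to obtain $\CPH(S)$ and all reversed suffix links in $O(n\log\sigma_S)$ time and $O(n)$ space by Theorem~\ref{lem:construction_cph_string}. Whenever Algorithm~1 creates a reversed suffix link $\rslink(\hat{v}(i),b)=u(i)$ it simultaneously fixes the forward link $\slink(u(i))=\hat{v}(i)$, so I would record the forward suffix links as well at no extra asymptotic cost, and I would also precompute the array $\PD(S)$ in $O(n)$ time (e.g.\ by the online procedure of Lemma~\ref{lem:online_PD}).

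For the second phase, observe that every edge label of $\CPH(S)$ is a value of a parent-distance encoding, hence an integer in $[0..n-1]$, and that there are exactly $n$ edges. Representing each edge as a pair (parent id, label) with both coordinates in $[0..n]$ and applying two passes of counting sort yields the edge-sorted adjacency lists in $O(n)$ time and $O(n)$ space; the assumption $\sigma = n^{O(1)}$ is what guarantees that this integer sort, and the auxiliary structure below, stay within linear time. From the sorted edges I would in addition build a static dictionary mapping each pair (node, label) to the corresponding child, so that a child can be located by its integer label in $O(1)$ time.

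For the third phase, write $M_i = \mrp(u(i))$ for the deepest node that is a prefix of $\PD(S[i..])$. The structural fact I would rely on is that $\slink(M_i)$ is exactly the prefix of $\PD(S[i+1..])$ of length $\depth(M_i)-1$: deleting the first $0$ and zeroing the front pointers is precisely the transformation turning $\PD(S[i..])$ into $\PD(S[i+1..])$. Since $M_{i+1}$ is the deepest prefix of $\PD(S[i+1..])$, the node $\slink(M_i)$ is an ancestor of $M_{i+1}$, which gives the monotonicity $\depth(M_{i+1}) \ge \depth(M_i)-1$. I would therefore compute $M_1$ once by descending from the root along $\PD(S)$ (at most $n$ steps), and then for $i = 1,\ldots,n-1$ obtain $M_{i+1}$ by jumping to $\slink(M_i)$ and descending along $\PD(S[i+1..])$ as far as possible. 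The label needed at depth $\ell$ is $\PD(S[i+1..])[\ell]$, which equals $\PD(S)[i+\ell]$ when $(i+\ell)-\PD(S)[i+\ell] \ge i+1$ and is $0$ otherwise, so it is available in $O(1)$ from the precomputed array, and each descending step is an $O(1)$ child lookup. By the monotonicity the number of descending steps telescopes to $\depth(M_1)+\sum_{i=1}^{n-1}\bigl(\depth(M_{i+1})-\depth(M_i)+1\bigr)=O(n)$, after which I set $\mrp(u(i))=M_i$, omitting the self-loops $M_i=u(i)$. The three phases cost $O(n\log\sigma_S)+O(n)+O(n)=O(n\log\sigma_S)$ time in $O(n)$ space.

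The main obstacle is the third phase. The delicate points are proving the monotonicity $\depth(M_{i+1}) \ge \depth(M_i)-1$ rigorously, so that the descent is genuinely amortized linear, and simultaneously keeping each child access at $O(1)$: the per-node fan-out can be $\Omega(\sqrt n)$ by Lemma~\ref{lem:many_branches}, so an ordinary binary search over children would inflate the budget to $O(n\log n)$, which is exactly why the integer-alphabet assumption is invoked to build the constant-time child-lookup dictionary. Verifying the $O(1)$ formula for $\PD(S[i+1..])[\ell]$ --- namely that shortening a suffix only severs a parent-distance pointer that has left the window, turning that entry into $0$ and never merely shortening it --- is the technical crux that makes the suffix-link-guided descent follow the correct path.
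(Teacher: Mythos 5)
Your three-phase plan is essentially the paper's own proof, fleshed out: the paper sorts the pairs (node id, edge label) by radix sort in $O(n)$ time exactly as in your second phase, and it computes the maximal reach pointers ``using the reversed suffix links, in a similar way to the position heaps for exact matching''~\cite{ehrenfeucht_position_heaps_2011} --- which is precisely the suffix-link-jump-and-descend procedure you reconstruct in your third phase. Your supporting details are correct: every node receives its suffix link at creation time in Algorithm~1, so $\slink(M_i)$ is always a node; it is the length-$(\depth(M_i)-1)$ prefix of $\PD(S[i+1..])$ and hence an ancestor of $M_{i+1}$, giving the monotonicity and the telescoping $O(n)$ bound on descent steps; and your $O(1)$ window formula for $\PD(S[i+1..])[\ell]$ is exactly the suffix-consistency observation underlying Lemma~\ref{lem:online_PD} (a nonzero parent-distance value persists; a pointer leaving the window becomes $0$, never merely shorter). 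So in structure and correctness you match the paper, while supplying the amortization argument the paper outsources to a citation.

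The one soft spot is your child-lookup dictionary. You correctly identify that fan-out can be $\Omega(\sqrt{n})$ by Lemma~\ref{lem:many_branches}, so binary search over children costs $O(\log n)$ per step and $O(n \log n)$ overall, which exceeds the claimed $O(n \log \sigma_S)$ when $\sigma_S$ is small. But your repair --- a static dictionary on integer pairs with $O(1)$ worst-case lookup built in $O(n)$ time --- is not a standard deterministic black box: FKS-style perfect hashing gives only \emph{expected} $O(n)$ construction, and known deterministic constructions with constant lookup need superlinear preprocessing; the assumption $\sigma = n^{O(1)}$ buys you the radix sort but not this. So as written your bound is randomized (expected) rather than worst-case. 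It is worth noting that the paper's own proof is no better here: its claimed $O(\log \sigma_S)$ per descent step is inherited from position heaps for exact matching, where fan-out is at most $\sigma$, but that bound does not transfer to the CPH precisely because of Lemma~\ref{lem:many_branches}, so the paper leaves the same implementation detail unresolved that you at least made explicit.
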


\begin{proof}
  We sort the edges of $\CPH(S)$ as follows:
  Let $i$ be the id of each node $u(i)$.
  Then sort the pairs $(i, x)$ of the ids and the edge labels.
  Since $i \in [0..n-1]$ and $x \in [1..n^{O(1)}]$,
  we can sort these pairs in $O(n)$ time by a radix sort.
  The maximal reach pointers can be computed in $O(n \log \sigma_S)$
  time using the reversed suffix links,
  in a similar way to the position heaps for exact matching~\cite{ehrenfeucht_position_heaps_2011}.
\end{proof}
See Figure~\ref{fig:CPH_mrp} in Appendix for an example
of $\CPH(S)$ with maximal reach pointers.

\section{Cartesian-tree Position Heaps for Tries}

Let $\inputtrie$ be the input text trie with $N$ nodes.
A na\"ive extension of our CPH
to a trie would be to build the CPH for
the sequence $\langle \PD(\inputtrie[N..]), \ldots, \PD(\inputtrie[1..]) \rangle$
of the parent encodings of all the path strings of $\inputtrie$
towards the root $\mathbf{r}$.
However, this does not seem to work
because the parent encodings are not consistent for suffixes.
For instance, consider two strings
$\mathtt{1432}$ and $\mathtt{4432}$.
Their longest common suffix 
$\mathtt{432}$ is represented by a single path in a trie $\inputtrie$.
However, the longest common suffix of
$\PD(\mathtt{1432}) = 0123$ and $\PD(\mathtt{4432}) = 0100$ is $\varepsilon$.
Thus, in the worst case,
we would have to consider all the path strings $\inputtrie[N..]$, \ldots, $\inputtrie[1..]$ in $\inputtrie$ separately,
but the total length of these path strings in $\inputtrie$ is $\Omega(N^2)$.

To overcome this difficulty,
\sinote*{modified}{%
we reuse the $\FP$ encoding from Section~\ref{sec:cph_string}.
Since $\FP(S)[i]$ is determined merely by the suffix $S[i..]$,
the $\FP$ encoding is suffix-consistent.
}%
For an input trie $\inputtrie$,
let the \emph{FP-trie} $\fptrie$ be the reversed trie
storing $\FP(\inputtrie[i..])$ for all the original
path strings $\inputtrie[i..]$ towards the root.
Let $N'$ be the number of nodes in $\fptrie$.
Since $\FP$ is suffix-consistent, $N' \leq N$ always holds.
Namely, $\FP$ is a linear-size representation of
the equivalence relation of the nodes of $\inputtrie$ w.r.t. $\approx$.
Each node $v$ of $\fptrie$ stores
the equivalence class
$\mathcal{C}_v = \{i \mid \fptrie[v..] = \FP(\inputtrie[i..])\}$
of the nodes $i$ in $\inputtrie$ that correspond to $v$.
We set $\min\{\mathcal{C}_v\}$ to be the representative of $\mathcal{C}_v$,
as well as the id of node $v$.
See Figure~\ref{fig:input_trie} in Appendix.

Let $\TrieSigma$ be the set of distinct characters
(i.e. edge labels) in $\inputtrie$ and let
$\triesigma = |\TrieSigma|$.
The FP-trie $\fptrie$ can be computed in $O(N\triesigma)$ time and
working space by a standard traversal on $\inputtrie$,
where we store at most $\triesigma$ front pointers
in each node of the current path in $\inputtrie$ due to Lemma~\ref{lem:online_PD}.

Let $i_{N'}, \ldots, i_{1}$ be the node id's of $\fptrie$
which are sorted in decreasing order.
The Cartesian-tree position heap for the input trie $\inputtrie$
is $\CPH(\inputtrie) = \SHT(\mathcal{W}_{\inputtrie})$,
where $\mathcal{W}_{\inputtrie} = \langle \PD(\inputtrie[i_{N'}..], \ldots, \PD(\inputtrie[i_1..]) \rangle$.


As in the case of string inputs in Section~\ref{sec:cph_string},
we insert the shortest prefix of $\PD(\inputtrie[i_{k}..])$
that does not exist in $\CPH(\inputtrie[i_{k+1}..])$.
To perform this insert operation,
we use the following data structure for a random-access
query on the PD encoding of any path string in $\inputtrie$:

\begin{lemma} \label{lem:nearest_ancestor}
  There is a data structure of size $O(N \triesigma)$
  that can answer the following queries in $O(\triesigma)$ time each.\\
  \textbf{Query input:} The id $i$ of a node in $\inputtrie$ and integer $\ell > 0$.\\
  \textbf{Query output:} The $\ell$th (last) symbol $\PD((\inputtrie[i..])[1..\ell])[\ell]$ in $\PD(\inputtrie[i..])[1..\ell]$.
\end{lemma}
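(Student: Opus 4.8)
The plan is to reduce the query to a ``nearest previous smaller-or-equal'' query along the root-to-$i$ path and to precompute a nearest-ancestor table indexed by characters. First I would observe that the value $\PD((\inputtrie[i..])[1..\ell])[\ell]$ equals $\PD(\inputtrie[i..])[\ell]$, since by definition $\PD(S)[\ell]$ depends only on $S[1..\ell]$. Writing $S = \inputtrie[i..]$, position $1$ of $S$ sits at $i$ and position $\ell$ sits at the ancestor $y = \anc(i,\ell-1)$; let $a_z$ denote the label of the edge entering a node $z$, and put $c = a_y = S[\ell]$. Unwinding the definition of $\PD$, the answer is the distance to the position just ``before'' $\ell$ carrying a value $\le c$, which in the trie is the \emph{shallowest} proper descendant $z^\ast$ of $y$ lying on the path down to $i$ whose incoming label satisfies $a_{z^\ast} \le c$; the output is $\depth(z^\ast) - \depth(y)$, and $0$ when no such $z^\ast$ exists. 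Both $y$ and $c$ are obtained in $O(1)$ time from the level-ancestor structure of \cite{BenderF04} on $\inputtrie$.

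Next I would precompute, for every node $z$ and every character $c \in \TrieSigma$, the nearest proper ancestor $\na(z,c)$ of $z$ whose incoming label is $\le c$. A single top-down traversal suffices, using $\na(z,c) = \parent(z)$ if $a_{\parent(z)} \le c$ and $\na(z,c) = \na(\parent(z),c)$ otherwise (with the root acting as $\bot$); this costs $O(\triesigma)$ per node, hence $O(N\triesigma)$ time and space overall.

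The main obstacle is that the required edge lies \emph{below} $y$, toward a leaf, so the search direction is branch-dependent and a single lookup in $\na$ returns the \emph{deepest} qualifying edge rather than the shallowest one we need---one can check on a short example that these differ. To get the shallowest qualifying edge below $y$ on the path to $i$, I would read the pointers $\na(\cdot,c)$, for each fixed $c$, as a forest $F_c$ in which the parent of each node is its nearest qualifying ancestor; since $y$ qualifies ($a_y = c$) and is a qualifying ancestor of $i$, it is an $F_c$-ancestor of $i$, and the node we seek is exactly the child of $y$ in $F_c$ on the path to $i$. Preprocessing each $F_c$ for constant-time level-ancestor queries \cite{BenderF04}, this child is the level ancestor of $i$ in $F_c$ at $F_c$-depth one greater than that of $y$, computed in $O(1)$ time; the $\triesigma$ forests together still occupy $O(N\triesigma)$ space. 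If $i$ and $y$ share the same $F_c$-depth (e.g. $\ell = 1$), no qualifying descendant exists and the answer is $0$. Finally, since every $F_c$-descendant is a genuine trie-descendant, $z^\ast$ automatically satisfies $\depth(z^\ast) > \depth(y)$, and returning $\depth(z^\ast) - \depth(y)$ yields the query answer in $O(\triesigma)$ (indeed $O(1)$) time; the only bookkeeping to be careful about is the orientation mismatch between the leaf-rooted string $S$ and the root-to-$i$ labelling used in the forests.
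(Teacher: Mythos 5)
Your reduction is sound: the answer is $\depth(z^\ast)-\depth(y)$ where $z^\ast$ is the shallowest node strictly below $y=\anc(i,\ell-1)$ on the path down to $i$ whose incoming label is at most $c=a_y$, and $0$ if no such node exists. The gap is in how the forest $F_c$ is made to answer this. Every node of $\inputtrie$ sits in $F_c$ with parent $\na(\cdot,c)$, so the $F_c$-ancestor chain of $i$ begins with $i$ itself \emph{whether or not} $a_i \le c$. Hence, when no node strictly between $i$ and $y$ qualifies, your level-ancestor query at $F_c$-depth $\depth_{F_c}(y)+1$ returns $i$ and you output $\depth(i)-\depth(y)=\ell-1$ even if $a_i>c$, where the correct output is $0$. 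Concretely, take a path whose labels read from $i$ upward are $\mathtt{3},\mathtt{2}$ and query $\ell=2$: $\PD(\mathtt{32})[2]=0$, but your algorithm reports $1$. Your stated guard (``$i$ and $y$ share the same $F_c$-depth'') only catches $\ell=1$, not this case. The repair is trivial: every $F_c$-ancestor of $i$ \emph{other than $i$ itself} qualifies by construction, so it suffices, when the query returns $i$, to check $a_i\le c$ and report $0$ on failure---an $O(1)$ test. With that patch your argument and the $O(N\triesigma)$-space bound go through, with $O(1)$ query time, which is even better than the $O(\triesigma)$ claimed in the lemma.

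It is worth knowing that the paper's own proof takes precisely the route you rejected: it precomputes, for each node $\mathbf{x}$ and each label $a\in\TrieSigma$, the nearest ancestor $\mathbf{y}_a$ whose incoming edge is labeled exactly $a$, then at query time loops over all $a\le b$ (where $b$ is the label at position $\ell$), keeps $d_a=\depth(\mathbf{y}_a)-\depth(\mathbf{z}')$ when positive, and returns $\min D$---whence the $O(\triesigma)$ query time. Your obstacle is genuine there: $\na(\mathbf{x},a)$ yields the \emph{deepest} occurrence of $a$ on the path, while $\PD$ at position $\ell$ needs the \emph{shallowest} occurrence strictly below $\ell$, and when a label repeats these differ; e.g., with labels $\mathtt{1},\mathtt{1},\mathtt{1},\mathtt{2}$ read from $i$ upward and $\ell=4$, the true value is $1$ but the per-character minimization yields $2$. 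So your forest-plus-level-ancestor machinery is not an overcomplication but a substantive correctness repair of exactly the step the published proof glosses over, at the same $O(N\triesigma)$ space; once the boundary check above is added, your version is the more careful of the two arguments.
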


\begin{proof}
Let $\mathbf{x}$ be the node with id $i$,
and $\mathbf{z} = \anc(\mathbf{x}, \ell)$.
Namely, $\str(\mathbf{x}, \mathbf{z}) = (\inputtrie[j..])[1..\ell]$.
For each character $a \in \TrieSigma$,
let $\na(\mathbf{x},a)$ denote the nearest ancestor $\mathbf{y}_a$ of $\mathbf{x}$
such that the edge $(\parent(\mathbf{y}_a), \mathbf{y}_a)$ is labeled $a$.
If such an ancestor does not exist, then we set $\na(\mathbf{x},a)$
to the root $\mathbf{r}$.

Let $\mathbf{z'} = \anc(\mathbf{x}, \ell-1)$,
and $b$ be the label of the edge $(\mathbf{z}, \mathbf{z'})$.
Let $D$ be an empty set.
For each character $a \in \TrieSigma$,
we query $\na(\mathbf{x}, a) = \mathbf{y}_a$.
If $d_a = |\mathbf{y}_a| - |\mathbf{z}'| > 0$ and $a \leq b$,
then $d_a$ is a candidate for $(\PD(\inputtrie[j..])[1..\ell])[\ell]$
and add $d_a$ to set $D$.
After testing all $a \in \TrieSigma$,
we have that $(\PD(\inputtrie[j..])[1..\ell])[\ell] = \min D$.
See Figure~\ref{fig:nearest_ancestor}.

\begin{figure}[tbh]
  \centering
    \includegraphics[scale=0.5]{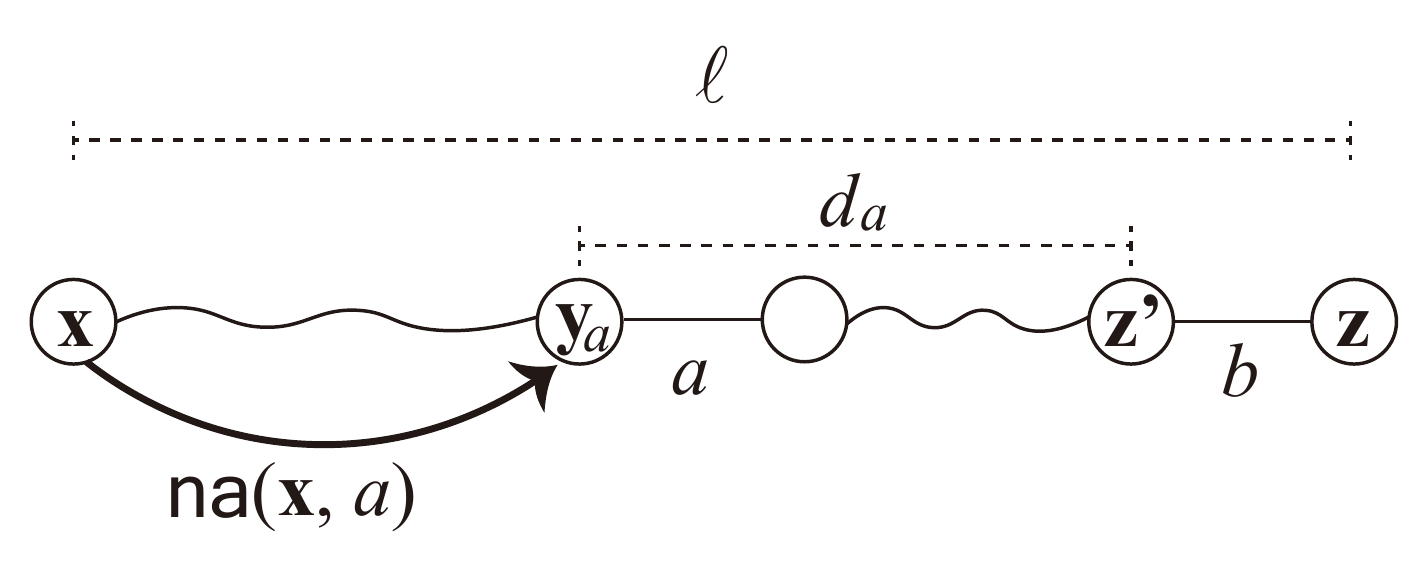}
  \caption{Illustration for the data structure of Lemma~\ref{lem:nearest_ancestor}, where $(\inputtrie[i..])[1..\ell] = \str(\mathbf{x}, \mathbf{z})$.}
  \label{fig:nearest_ancestor}
\end{figure}

For all characters $a \in \TrieSigma$ and all nodes $x$ in $\inputtrie$,
$\na(\mathbf{x}, a)$ can be pre-computed in a total of
$O(N \triesigma)$ preprocessing time and space,
by standard traversals on $\inputtrie$.
Clearly each query is answered in $O(\sigma_{\inputtrie})$ time.
\end{proof}

\begin{theorem}
  Let $\inputtrie$ be a given trie with $N$ nodes
  whose edge labels are from an integer alphabet of size $n^{O(1)}$.
  The edge-sorted $\CPH(\inputtrie)$ with the maximal reach pointers,
  which occupies $O(N \triesigma)$ space,
  can be built in $O(N \triesigma)$ time.
\end{theorem}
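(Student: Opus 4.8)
The plan is to build $\CPH(\inputtrie) = \SHT(\mathcal{W}_{\inputtrie})$ by running the trie analogue of Algorithm~1 over $\fptrie$, and then to add edge-sorting and maximal reach pointers exactly as in the string case. First I would compute $\fptrie$ together with the random-access structure of Lemma~\ref{lem:nearest_ancestor}, which returns any symbol of $\PD(\inputtrie[i..])$ in $O(\triesigma)$ time; both take $O(N\triesigma)$ time and space. Processing the nodes $i_{N'},\ldots,i_1$ of $\fptrie$ in decreasing id order (equivalently, top-down by level) guarantees, exactly as in the string case, that the parent-encoded path strings are inserted in non-decreasing order of length and that the $\fptrie$-parent $\parent(i_k)$ — whose path string is $\inputtrie[i_k..]$ with its first symbol deleted — has already been inserted before $i_k$. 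Suffix-consistency of $\FP$ (the property used to define $\fptrie$, cf. Lemma~\ref{lem:FP_PD}) is what makes $\parent(i_k)$ encode the correct shorter string, so the suffix-link machinery carries over.

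To insert $u(i_k)$ I would start from the node $u(\parent(i_k))$ and climb towards the root, seeking the deepest ancestor $v(i_k)$ carrying a reversed suffix link whose label matches $a=|\Front_{\PD(\cdots)}|$; then $p(i_k)=\rslink(v(i_k),a)$ is the parent of the new node, and the new reversed suffix link out of $\hat{v}(i_k)$ is created just as in Theorem~\ref{lem:construction_cph_string}, from which correctness is inherited. The key efficiency point is that the label need not be recomputed from scratch at each climbed node: by the monotonicity of Lemma~\ref{lem:monotonicity} it changes by at most one per edge of the upward walk, so I can maintain it incrementally in $O(1)$ time per climbed node, invoking the $O(\triesigma)$-time query of Lemma~\ref{lem:nearest_ancestor} only once per inserted node to initialise it. Since every node of $\fptrie$ has at most $\triesigma+1$ outgoing reversed suffix links (each label satisfies $a\le\triesigma$ by Lemma~\ref{lem:upper_bound_a}), the matching link is located in $O(1)$ amortised time, so the work per climbed node is $O(1)$.

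The main obstacle is bounding the total climb length $\sum_k d_{i_k}$. In the string case the chain-telescoping of Theorem~\ref{lem:construction_cph_string} gives $\sum_i d_i=O(n)$. Here $d_{i_k}=|u(\parent(i_k))|-|u(i_k)|+2$, but $\parent(i_k)$ is the $\fptrie$-parent rather than the previously inserted node, so summing telescopes over a tree and leaves $\sum_k d_{i_k}=\sum_{\mathbf{v}\in\fptrie}(c_{\mathbf{v}}-1)\,|u(\mathbf{v})|+O(N)$, where $c_{\mathbf{v}}$ is the out-degree of $\mathbf{v}$ in $\fptrie$. Because the edge labels of $\fptrie$ are $\FP$-values in $[0..\triesigma]$ we have $c_{\mathbf{v}}\le\triesigma+1$, so the sum is at most $\triesigma\sum_{\mathbf{v}}|u(\mathbf{v})|$; the difficulty is that the total CPH path length $\sum_{\mathbf{v}}|u(\mathbf{v})|$ can be superlinear on its own (and CPH out-degrees are genuinely large, cf. Lemma~\ref{lem:many_branches}). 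I expect the heart of the proof to be the inequality $\sum_{\mathbf{v}}(c_{\mathbf{v}}-1)|u(\mathbf{v})|=O(N\triesigma)$, which I would establish through a trade-off argument: a node $\mathbf{v}$ whose CPH-image is deep arises from a long nested chain of $\PD$-prefixes, and creating distinct $\fptrie$-children at such a node forces correspondingly many distinct $\FP$-values, each consuming one of the at most $\triesigma$ available labels. Charging every extra $\fptrie$-child against such a label, amortised over the whole trie, I expect to bound the sum by $O(N\triesigma)$, so that the construction runs in $O(N\triesigma)$ time.

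Finally, the edge-sorting and maximal reach pointers are obtained exactly as for strings: radix-sorting the pairs $(\id,\text{label})$ of all CPH edges costs $O(N)$ because ids lie in $[0..N-1]$ and labels in $[1..n^{O(1)}]$, and all maximal reach pointers are computed in $O(N\triesigma)$ time by following reversed suffix links as in the exact-matching position heap. Since $\fptrie$ has $N'\le N$ nodes and at most $N'$ reversed suffix links, the whole structure occupies $O(N\triesigma)$ space, giving the claimed time and space bounds.
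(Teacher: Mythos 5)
Your overall scaffolding (FP-trie construction, random access via Lemma~\ref{lem:nearest_ancestor}, starting each insertion from the CPH node of the $\fptrie$-parent, radix-sorting edges, computing maximal reach pointers via reversed suffix links) matches the paper. But the heart of your argument has a genuine gap: the claimed bound on the total climbing cost. You correctly observe that the telescoping argument of Theorem~\ref{lem:construction_cph_string} breaks on a tree, reduce the cost to $\sum_{\mathbf{v}}(c_{\mathbf{v}}-1)\,|u(\mathbf{v})| + O(N)$, and then state that you \emph{expect} to prove $\sum_{\mathbf{v}}(c_{\mathbf{v}}-1)|u(\mathbf{v})| = O(N\triesigma)$ by a charging argument in which deep CPH images at branching $\fptrie$-nodes ``consume'' labels. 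This inequality is never established, the sketched charging scheme is not an argument (nothing prevents many distinct branching nodes along a long spine, each with a deep CPH image, from re-using the same small set of $\FP$-labels), and the paper itself signals that this route was not made to work: its proof explicitly says the string-case amortization ``cannot be applied'' to tries and abandons node-by-node climbing altogether.

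What the paper does instead is replace the upward walk by $O(\triesigma)$ constant-time jumps per insertion. It maintains, for each label $a \in [0..\triesigma]$, a copy $\CPH_a(\inputtrie)$ equipped with a nearest-marked-ancestor structure in which exactly the nodes with a defined $\rslink(\cdot,a)$ are marked (amortized $O(1)$ per query and per marking), precomputes for each trie node $\mathbf{w}$ the array $\mathcal{I}_{\mathbf{w}}$ of front-pointer positions of $\PD(\str(\mathbf{w}))$ (size at most $\triesigma$), and uses a dynamic level-ancestor structure supporting leaf insertions and $\anc$ queries in $O(1)$ time. Starting from $u$ with the initial label $a$, it jumps directly to $v' = \anc(u, d(a))$ with $d(a) = \max\{|u| - \mathcal{I}_{\mathbf{w}}[a] + 1, 0\}$ --- the shallowest depth at which the required label can still equal $a$ --- checks markedness in $\CPH_a(\inputtrie)$, and on failure decrements $a$ and repeats; on success it returns $\nma_a(v)$. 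Since $a \leq \triesigma$ by Lemma~\ref{lem:upper_bound_a}, each insertion costs $O(\triesigma)$ amortized, giving $O(N\triesigma)$ total without ever needing your conjectured inequality. Your incremental maintenance of the label along a unit-step climb (via Lemma~\ref{lem:monotonicity}) is sound as far as correctness goes; it is only the time bound of that climb that is unsupported, and it is precisely the point where the paper's proof diverges from your plan.
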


\begin{proof}
  The rest of the construction algorithm of $\CPH(\inputtrie)$
  is almost the same as the case of the CPH for a string,
  except that the amortization argument in the proof
  for Theorem~\ref{lem:construction_cph_string} cannot be applied to
  the case where the input is a trie.
  Instead, we use the nearest marked ancestor (NMA)
  data structure~\cite{westbrook_fast_incre_1992,amir_improved_dynamic_1995}
  that supports queries and marking nodes in amortized $O(1)$ time each,
  using space linear in the input tree.
  For each $a \in [0..\triesigma]$,
  we create a copy $\CPH_a(\inputtrie)$ of $\CPH(\inputtrie)$
  and maintain the NMA data structure on $\CPH_a(\inputtrie)$
  so that every node $v$ that has defined reversed suffix link $\rslink(v, a)$
  is marked, and any other nodes are unmarked.
  The NMA query for a given node $v$ with character $a$
  is denoted by $\nma_a(v)$.
  If $v$ itself is marked with $a$, then let $\nma_a(v) = v$.
  %
  \sinote*{added}{%
  For any node $\mathbf{x}$ of $\inputtrie$,
  let $\mathcal{I}_{\mathbf{x}}$ be the array of size at most $\triesigma$
  s.t. $\mathcal{I}_{\mathbf{x}}[j] = h$
  iff $h$ is the $j$th smallest element of $\Front_{\PD(\str(\mathbf{x}))}$.
  }%
  
  We are ready to design our construction algorithm:
  Suppose that we have already built $\CPH(\inputtrie[i_{k+1}..])$
  and we are to update it to $\CPH(\inputtrie[i_{k}..])$.
  \sinote*{modified}{%
  Let $\mathbf{w}$ be the node in $\inputtrie$ with id $i_{k}$,
  and let $\mathbf{u} = \parent(\mathbf{w})$ in $\fptrie$.
  Let $u$ be the node of $\CPH(\inputtrie[i_{k+1}..])$
  that corresponds to $\mathbf{u}$.
  We initially set $v \leftarrow u$ and
  $a \leftarrow |\Front_{\PD(\inputtrie[i_k..i_k+|u|])}|$.
  Let $d(a) = \max\{|u|-\mathcal{I}_{\mathbf{w}}[a]+1, 0\}$.
  We perform the following:
  \begin{enumerate}
  \item[(1)] Check whether $v' = \anc(u, d(a))$ is marked in $\CPH_a(\inputtrie)$. If so, go to (2). Otherwise, update $v \leftarrow v'$, $a \leftarrow a - 1$, and repeat (1).
  \item[(2)] Return $\nma(v, a)$.
  \end{enumerate}
  By the definitions of $\mathcal{I}_{\mathbf{w}}[a]$ and $d(a)$,
  the node $v(i_k)$ from which we should take the reversed suffix link
  is in the path between $v'$ and $v$,
  and it is the lowest ancestor of $v$ that has
  the reversed suffix link with $a$.
  Thus, the above algorithm correctly computes the desired node.
  By Lemma~\ref{lem:upper_bound_a},
  the number of queries in (1) for each of the $N'$ nodes is $O(\triesigma)$,
  and we use the dynamic level ancestor data structure
  on our CPH that allows for leaf insertions
  and level ancestor queries in $O(1)$ time each~\cite{AlstrupH00}.
  This gives us $O(N \triesigma)$-time and space construction.
  }%

  We will reuse the random access data structure of Lemma~\ref{lem:nearest_ancestor}
  for pattern matching (see Section~\ref{sec:matching_trie}).
  Thus $\CPH(\inputtrie)$ requires $O(N \sigma_{\inputtrie})$ space.
\end{proof}




\section{Cartesian-tree Pattern Matching with Position Heaps}
\label{sec:matching}

\subsection{Pattern Matching on Text String $S$ with $\CPH(S)$}
\label{sec:matching_string}

Given a pattern $P$ of length $m$, we first compute
the greedy factorization $\fact(P) = P_0, P_1, \ldots, P_k$ of $P$
such that $P_0 = \varepsilon$,
and for $1 \leq l \leq k$,
$P_l = P[\lensum(l-1)+1..\lensum(l)]$ is the longest prefix of
$P_l \cdots P_k$ that is represented by $\CPH(S)$,
where $\lensum(l) = \sum_{j = 0}^l|P_j|$.
We consider such a factorization of $P$
since the height $h$ of $\CPH(S)$ can be smaller than the pattern length $m$.

\begin{lemma} \label{lem:number_of_branches}
 Any node $v$ in $\CPH(S)$ has at most $|v|$ out-going edges.
\end{lemma}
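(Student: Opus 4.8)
The plan is to read each out-going edge of $v$ as a one-symbol extension of the PD encoding that $v$ represents, and then to count how many distinct extension symbols are possible. Identify $v$ with its path label, which by construction is a PD encoding of length $|v|$. Every child of $v$ is a node $vc$ that is created when some suffix $S[j..]$ with $\PD(S[j..])[1..|v|] = v$ triggers the insertion of its length-$(|v|+1)$ prefix into the sequence hash tree. By prefix-consistency of the PD encoding, the label $c$ of that edge equals $\PD(S[j..\,j+|v|])[|v|+1]$, i.e. the PD value at the freshly appended position of a window of length $|v|+1$ whose first $|v|$ positions encode $v$. Since distinct children necessarily carry distinct edge labels, it suffices to bound the number of values this last PD symbol can take.

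First I would pin down the range of that symbol straight from the definition of $\PD$. Writing the label as $c = (|v|+1) - \ell$ when the nearest left position $\ell \in [1..|v|]$ carrying a value $\le$ the appended one exists, and $c = 0$ otherwise, every edge label lies in $\{0,1,\dots,|v|\}$. This already caps the out-degree by the size of this value set. To sharpen the count I would invoke Lemma~\ref{lem:CT-PD}: because the Cartesian tree of the window is completely determined by $v$, the positions $\ell$ that can ever be the nearest smaller-or-equal neighbour are exactly the right-to-left (suffix) minima of the window, a set fixed by $v$ alone; equivalently, the admissible nonzero distances $c = (|v|+1)-\ell$ form a $v$-determined subset of $\{1,\dots,|v|\}$. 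The out-degree is therefore the number of admissible nonzero distances plus possibly one for the new-minimum case $c=0$.

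The step I expect to be the main obstacle is precisely this boundary bookkeeping: collapsing the crude count on $\{0,1,\dots,|v|\}$ (which is $|v|+1$ values) down to the claimed $|v|$. The delicate case is the extreme configuration in which every window position is a strict suffix minimum (the PD prefix $0\,1\,1\cdots 1$), where all of $1,\dots,|v|$ are admissible nonzero distances; one must argue carefully about whether the new-minimum label $c=0$ can co-occur with the maximal-distance label $c=|v|$ as genuine \emph{sequence-hash-tree} edges, rather than merely as abstract PD continuations. I would handle this by tracking, for the suffixes routed through $v$, which one first realises each label and whether that insertion instead creates $v$ itself, and by using the suffix-minima structure of the Cartesian tree to rule out one of the two extreme labels at each node, thereby pruning the bound from $|v|+1$ to $|v|$.
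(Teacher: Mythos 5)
Your first two paragraphs reproduce, in a more careful form, exactly the argument the paper gives: identify each out-going edge $(v,c,u)$ with the one-symbol extension $c = u[|u|]$ of the PD string $v$, observe that by the definition of $\PD$ the appended symbol satisfies $0 \leq c \leq |u|-1 = |v|$, and count via the distinctness of sibling edge labels. Up to that point you and the paper agree --- and note that what this argument actually yields is an out-degree bound of $|v|+1$, since the label set $\{0,1,\ldots,|v|\}$ has $|v|+1$ elements. Your suffix-minima refinement does not improve this: the admissible nonzero distances can be all of $\{1,\ldots,|v|\}$, and the new-minimum label $0$ sits on top of them.

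The step you flag as ``the main obstacle'' --- ruling out one of the two extreme labels $0$ and $|v|$ at each node so as to prune $|v|+1$ down to $|v|$ --- is a genuine gap in your write-up, and it cannot be closed, because the stated bound of $|v|$ is in fact false. Nothing excludes the co-occurrence of both extremes: for $v = 01$, a window $a\,b\,x$ with $a \leq b$ yields label $0$ when $x < a$, label $2 = |v|$ when $a \leq x < b$, and label $1$ when $x \geq b$, so all $|v|+1$ continuations are admissible as PD extensions; and the paper itself exhibits them all realized as actual sequence-hash-tree edges --- the caption of Figure~\ref{fig:lower_bound_branches} (illustrating Lemma~\ref{lem:many_branches} with $k=4$) states that node $011$ has $|011|+1 = 4$ out-going edges, which forces its edge labels to be exactly $0,1,2,3$, both extremes included, and even calls the bound ``tight.'' In other words, the lemma carries an off-by-one error, and the paper's own proof commits it in its final sentence (``at most $|u|-1 = |v|$''), silently dropping the label $0$ from the count after having established only $c \leq |u|-1$. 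The honest conclusion of your (and the paper's) argument is ``at most $|v|+1$ out-going edges,'' which is best possible; fortunately this is all that is needed downstream, since the corollary on computing $\fact(P)$ in $O(m \log (\min\{m,h\}))$ time only requires the out-degree of a node to be bounded by its depth plus one. Had you simply stopped after your second paragraph and reported $|v|+1$, your proof would have been complete and, arguably, more correct than the paper's.
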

\begin{proof}
  Let $(v, c, u)$ be any out-going edge of $v$.
  When $|u|-1$ is a front pointer of $u$,
  then $c = u[|u|]$ and this is when $c$ takes the maximum value.
  Since $u[|u|] \leq |u|-1$, we have $c \leq |u|-1$.
  Since the edge label of $\CPH(S)$ is non-negative,
  $v$ can have at most $|u|-1 = |v|$ out-going edges.
\end{proof}

The next corollary immediately follows from Lemma~\ref{lem:number_of_branches}.
\begin{corollary}
  Given a pattern $P$ of length $m$,
  its factorization $\fact(P)$ can be computed in
  $O(m \log (\min\{m, h\}))$ time,
  where $h$ is the height of $\CPH(S)$.
\end{corollary}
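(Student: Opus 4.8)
The plan is to compute $\fact(P)$ by a greedy root-to-leaf traversal of $\CPH(S)$, and to charge the cost of each edge lookup to the binary search over the out-going edges at the current node, whose number is controlled by Lemma~\ref{lem:number_of_branches}. Concretely, for each factor $P_l$ we restart at the root and descend, at each step consuming one more pattern character and following the edge whose label equals the next symbol of the PD encoding of the current pattern suffix (the suffix that begins at the first character of $P_l$). We continue until the required edge is absent, at which point $P_l$ is the longest prefix of $P_l\cdots P_k$ represented by $\CPH(S)$, and we start the next factor from the root.

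First I would bound the total number of edge-lookup operations. For each factor $P_l$ the descent performs $|P_l|$ successful steps plus at most one failing step, so the total over all $k+1$ factors is $\sum_{l=1}^{k}(|P_l|+1) = m + k \le 2m = O(m)$, using $\sum_l |P_l| = m$ and $k \le m$. Next I would bound the cost of a single lookup. Since $\CPH(S)$ is edge-sorted, locating (or ruling out) the edge with a given label at a node $v$ is a binary search over the out-going edges of $v$. By Lemma~\ref{lem:number_of_branches}, $v$ has at most $|v|$ out-going edges; moreover every node visited during the traversal has depth at most $h$ (the height of $\CPH(S)$) and, since each factor has length at most $m$, depth at most $m$ as well. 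Hence the out-degree of any visited node is at most $\min\{m,h\}$, and each binary search costs $O(\log(\min\{m,h\}))$ time. Multiplying by the $O(m)$ lookups gives $O(m\log(\min\{m,h\}))$.

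It remains to confirm that the symbols of the PD encoding driving the descent can be produced without asymptotic overhead. Here I would invoke the online PD computation of Lemma~\ref{lem:online_PD}: reading $P$ left to right and maintaining the list of zero-positions lets us produce each needed symbol in amortized constant time. The only subtlety is the reset at each factor boundary, where the PD encoding must be recomputed relative to the new factor start; since the encodings of the factors are taken over disjoint blocks of $P$ whose lengths sum to $m$, the combined cost of all these online computations is $O(m)$, which is absorbed into the stated bound.

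The step I expect to require the most care is the out-degree bound at \emph{visited} nodes. Lemma~\ref{lem:number_of_branches} gives out-degree $\le |v|$, but turning this into the sharper $\min\{m,h\}$ factor inside the logarithm requires noting simultaneously that visited nodes have depth at most $h$ and that a single factor cannot descend below depth $m$; establishing both facts cleanly, and making sure the failing lookup at each factor boundary is also charged against a node of depth at most $\min\{m,h\}$, is the delicate part of the argument.
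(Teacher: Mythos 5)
Your proposal is correct and matches the paper's intended argument: the paper states the corollary as an immediate consequence of Lemma~\ref{lem:number_of_branches}, and the details you supply (greedy descent with $O(m)$ total edge lookups, out-degree of each visited node bounded by its depth and hence by $\min\{m,h\}$, binary search over sorted edges at $O(\log(\min\{m,h\}))$ per lookup) are exactly the omitted routine steps. One cosmetic nit: for producing the PD symbols you cite Lemma~\ref{lem:online_PD}, which concerns right-to-left (prepending) computation, whereas the left-to-right computation you actually need is the standard linear-time nearest-smaller-value scan; this does not affect correctness or the stated bound.
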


The next lemma is analogous to the position heap for exact matching~\cite{ehrenfeucht_position_heaps_2011}.
\begin{lemma} \label{lemma:decendant_condition}
  Consider two nodes $u$ and $v$ in $\CPH(S)$
  such that $u = \PD(P)$ the id of $v$ is $i$.
  Then, $\PD(S[i..])[1..|u|] = u$ iff
  one of the following conditions holds:
  (a) $v$ is a descendant of $u$;
  (b) $\mrp(v)$ is a descendant of $u$.
\end{lemma}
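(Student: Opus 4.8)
The plan is to reduce the entire statement to one elementary fact about prefixes of the single fixed string $\PD(S[i..])$. First I would unfold the definitions: the condition $\PD(S[i..])[1..|u|] = u$ says exactly that the path label of $u$, namely $\PD(P)$, is a prefix of $\PD(S[i..])$; by Lemma~\ref{lem:CT-PD} this is equivalent to $S[i..i+m-1] \approx P$, which is the occurrence we wish to characterize. Two structural facts drive everything. Fact (i): by the construction of the sequence hash tree, the node $v = u(i)$ that is inserted while processing $\PD(S[i..])$ has a path label that is a prefix of $\PD(S[i..])$. Fact (ii): by the definition of the maximal reach pointer, $\str(\mrp(v))$ is the \emph{longest} prefix of $\PD(S[i..])$ that is realized as a node of $\CPH(S)$. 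Since $v$ is itself such a prefix-node while $\mrp(v)$ is the deepest one, we get $|v| \le |\mrp(v)|$, and because prefixes of a fixed string are totally ordered by the prefix relation, $u$, $v$, and $\mrp(v)$ all lie on one common root-to-node path of $\CPH(S)$; in particular $\mrp(v)$ is a descendant-or-equal of $v$ (consistent with $j \ge i$ in the mrp definition).

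For the reverse direction I would argue by cases. If (a) holds, then $u$ is an ancestor-or-equal of $v$, so $\str(u)$ is a prefix of $\str(v)$, which by Fact (i) is a prefix of $\PD(S[i..])$; transitivity yields $\PD(S[i..])[1..|u|] = u$. If (b) holds, then $u$ is an ancestor-or-equal of $\mrp(v)$, so $\str(u)$ is a prefix of $\str(\mrp(v))$, which by Fact (ii) is a prefix of $\PD(S[i..])$; transitivity again closes the case.

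For the forward direction I would assume $\PD(S[i..])[1..|u|] = u$, so $u$ is a prefix of $\PD(S[i..])$ and is present in $\CPH(S)$. By Fact (ii), $|u| \le |\mrp(v)|$, and since $u$ and $\mrp(v)$ share the common prefix-path, $u$ is an ancestor-or-equal of $\mrp(v)$; that is, condition (b) holds. I would remark that (a) holds in addition precisely when $|u| \le |v|$, and fails exactly when $u$ lies strictly between $v$ and $\mrp(v)$ on the path, which is the regime the maximal reach pointer is designed to capture. Hence at least one of (a), (b) always holds, completing the equivalence.

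The only delicate points, and where I would take the most care, are the ancestor/descendant convention and the claim that $u$, $v$, $\mrp(v)$ genuinely share a single root-to-node path. I would fix the convention that \emph{descendant} includes the node itself, so that the boundary cases $u = v$ and $u = \mrp(v)$ are handled uniformly. The shared-path claim is the real crux, but it requires no case analysis on the branching of $\CPH(S)$: it follows immediately from the fact that all three nodes are prefixes of the one string $\PD(S[i..])$, hence comparable under the prefix order.
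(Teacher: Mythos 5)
Your proof is correct, and it is exactly the argument the paper intends: the paper states Lemma~\ref{lemma:decendant_condition} without proof, citing only the analogy with position heaps for exact matching~\cite{ehrenfeucht_position_heaps_2011}, and your prefix-chain argument (the labels of $u$, $v$, and $\mrp(v)$ are all prefixes of the single string $\PD(S[i..])$, hence totally ordered, with $\mrp(v)$ the deepest node on that path) is precisely the standard reasoning behind that analogy. Your explicit handling of the inclusive descendant convention and of the self-loop case $\mrp(v)=v$ is a welcome addition rather than a deviation.
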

See also Figure~\ref{fig:matching} in Appendix.
We perform a standard traversal on $\CPH(S)$ so that
one we check whether a node is a descendant of
another node in $O(1)$ time.

When $k = 1$~(i.e. $\fact(P) = P$),
$\PD(P)$ is represented by some node $u$ of $\CPH(S)$.
Now a direct application of Lemma~\ref{lemma:decendant_condition}
gives us all the $\occ$ pattern occurrences 
in $O(m \log m + \occ)$ time, where $\min\{m, h\} = m$ in this case.
All we need here is to report the id of every descendant of $u$
(Condition (a)) and the id of each node $v$ that satisfies Condition (b).
The number of such nodes $v$ is less than $m$.

When $k \geq 2$~(i.e. $\fact(P) \neq P$),
there is no node that represents $\PD(P)$ for the whole pattern $P$.
This happens only when $\occ < m$,
since otherwise there has to be a node representing $\PD(P)$
by the incremental construction of $\CPH(S)$,
a contradiction.
This implies that Condition (a) of Lemma~\ref{lemma:decendant_condition}
does apply when $k \geq 2$.
Thus, the \emph{candidates} for the pattern occurrences
only come from Condition (b),
which are restricted to the nodes $v$
such that $\mrp(v) = u_1$, where $u_1 = \PD(P_1)$.
We apply Condition (b) iteratively for
the following $P_2, \ldots, P_k$,
while keeping track of the position $i$ that was associated to each node $v$
such that $\mrp(v) = u_1$.
This can be done by padding $i$ with
the off-set $\lensum(l-1)$ when we process $P_l$.
We keep such a position $i$
if Condition (b) is satisfied for
all the following pattern blocks $P_2, \ldots, P_k$,
namely, if the maximal reach pointer of the
node with id $i+\lensum(l-1)$ points to node $u_l = \PD(P_l)$
for increasing $l = 2, \ldots, k$.
As soon as Condition (b) is not satisfied with some $l$,
we discard position $i$.

Suppose that we have processed the all pattern blocks
$P_1, \ldots, P_k$ in $\fact(P)$.
Now we have that
$\PD(S[i..])[1..m] = \PD(P)$ (or equivalently $S[i..i+m-1] \approx P$)
\emph{only if} the position $i$ has survived.
Namely, position $i$ is only a candidate of a pattern occurrence
at this point,
since the above algorithm only guarantees that
$\PD(P_1) \cdots \PD(P_k) = \PD(S[i..])[1..m]$.
Note also that, by Condition (b),
the number of such survived positions $i$
is bounded by $\min\{|P_1|, \ldots, |P_k|\} \leq m/k$.

For each survived position $i$,
we verify whether $\PD(P) = \PD(S[i..])[1..m]$.
This can be done by checking, for each increasing $l = 1, \ldots, k$,
whether or not $\PD(S[i..])[\lensum(l-1)+y]= \PD(P_1 \cdots P_l)[\lensum(l-1)+y]$
for every position $y$~($1 \leq y \leq |P_l|$) such that $\PD(P_l)[y] = 0$.
By the definition of $\PD$,
the number of such positions $y$ is at most $\sigma_{P_l} \leq \sigma_P$.
Thus, for each survived position $i$
we have at most $k \sigma_P$ positions to verify.
Since we have at most $m/k$ survived positions,
the verification takes a total of $O(\frac{m}{k} \cdot k \sigma_P) = O(m\sigma_P)$ time.

\begin{theorem}
  Let $S$ be the text string of length $n$.
  Using $\CPH(S)$ of size $O(n)$
  augmented with the maximal reach pointers,
  we can find all $\occ$ occurrences for a given pattern $P$
  in $S$ in $O(m(\sigma_P +\log(\min\{m,h\})) + \occ)$ time,
  where $m = |P|$ and $h$ is the height of $\CPH(S)$.
\end{theorem}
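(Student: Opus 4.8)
The plan is to assemble the algorithm and its cost analysis already developed in the discussion preceding this statement into a single running-time bound. First I would compute the greedy factorization $\fact(P) = P_0, P_1, \ldots, P_k$; by the corollary following Lemma~\ref{lem:number_of_branches} this costs $O(m \log(\min\{m,h\}))$ time, which both accounts for the $\log(\min\{m,h\})$ term and, as a by-product, locates for each block $P_l$ the node $u_l = \PD(P_l)$ of $\CPH(S)$. I would also assume a single traversal of $\CPH(S)$ that labels nodes so that each ancestry test is answered in $O(1)$ time.

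Next I would split into the two cases already identified. When $k = 1$, the whole encoding $\PD(P)$ is spelled out by a node $u$, and Lemma~\ref{lemma:decendant_condition} reduces occurrence reporting to emitting the ids of all descendants of $u$ (Condition (a)) together with the ids of the fewer-than-$m$ nodes $v$ whose maximal reach pointer lands below $u$ (Condition (b)); this gives $O(m + \occ)$ after the factorization, consistent with $\min\{m,h\}=m$ in this case. When $k \geq 2$, no node spells $\PD(P)$, so Condition (a) is vacuous and only Condition (b), applied blockwise, yields candidate positions. I would carry each candidate $i$ forward through $P_2, \ldots, P_k$ by shifting the tracked id by the offset $\lensum(l-1)$ and discarding $i$ as soon as some block's maximal-reach test fails; this candidate bookkeeping contributes only $O(m)$. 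The key structural fact to record here is that the surviving candidates number at most $\min\{|P_1|, \ldots, |P_k|\} \le m/k$, since Condition (b) ties them to the narrowest block.

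The main work, and the step I expect to be the genuine obstacle, is the verification phase for $k \geq 2$. The subtlety is that matching $\PD(P_1)\cdots\PD(P_k)$ against $\PD(S[i..])[1..m]$ does not by itself certify $\PD(P) = \PD(S[i..])[1..m]$, since a position that is $0$ within its own block may acquire a nonzero value once earlier blocks are prepended. I would argue that nonzero block values are already stable (as observed in the proof of Lemma~\ref{lem:online_PD}, a non-zero PD value is never altered by prepending), so only the zero positions of each block require rechecking; by Lemma~\ref{lem:number_of_zeros} each block $P_l$ contributes at most $\sigma_{P_l} \le \sigma_P$ such positions, hence at most $k\sigma_P$ positions per candidate. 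Combining the $\le m/k$ surviving candidates with the $\le k\sigma_P$ checks each bounds the verification by $O(\tfrac{m}{k}\cdot k\sigma_P) = O(m\sigma_P)$, which supplies the $\sigma_P$ term. Finally I would sum the contributions, namely $O(m\log(\min\{m,h\}))$ for factorization, the absorbed $O(m)$ for candidate tracking, $O(m\sigma_P)$ for verification, and $O(\occ)$ for reporting, to obtain $O(m(\sigma_P + \log(\min\{m,h\})) + \occ)$, and confirm that the index occupies $O(n)$ space by the earlier construction theorem.
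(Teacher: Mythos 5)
Your proposal is correct and follows essentially the same route as the paper: greedy factorization via the out-degree bound, the descendant/maximal-reach dichotomy of Lemma~\ref{lemma:decendant_condition} split into the $k=1$ and $k \geq 2$ cases, the $\min\{|P_1|,\ldots,|P_k|\} \leq m/k$ bound on surviving candidates, and verification restricted to the at most $\sigma_{P_l}$ zero positions of each block for a total of $O(m\sigma_P)$. Your added justifications --- the stability of nonzero PD values under prepending (from the proof of Lemma~\ref{lem:online_PD}) and the appeal to Lemma~\ref{lem:number_of_zeros} for the per-block zero count --- merely make explicit what the paper leaves implicit.
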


\subsection{Pattern Matching on Text Trie $\inputtrie$ with $\CPH(\inputtrie)$}
\label{sec:matching_trie}

In the text trie case,
we can basically use the same matching algorithm as in the text string case
of Section~\ref{sec:matching_string}.
However, recall that we cannot afford to store
the PD encodings of the path strings in $\inputtrie$
as it requires $\Omega(n^2)$ space.
Instead, we reuse the random-access data structure of
Lemma~\ref{lem:nearest_ancestor} for the verification step.
Since it takes $O(\sigma_{\inputtrie})$ time for each random-access query,
and since the data structure occupies $O(N \sigma_{\inputtrie})$ space,
we have the following complexity:
\begin{theorem}
  Let $\inputtrie$ be the text trie with $N$ nodes.  
  Using $\CPH(\inputtrie)$ of size $O(N \sigma_{\inputtrie})$
  augmented with the maximal reach pointers,
  we can find all $\occ$ occurrences for a given pattern $P$
  in $\inputtrie$ in 
  $O(m(\sigma_P \sigma_{\inputtrie}+\log(\min\{m,h\})) + \occ)$ time,
  where $m = |P|$ and $h$ is the height of $\CPH(\inputtrie)$.
\end{theorem}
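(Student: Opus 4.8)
The plan is to show that the matching procedure of Section~\ref{sec:matching_string} carries over to $\CPH(\inputtrie)$ essentially verbatim, and then to isolate the single phase where the trie setting costs more. Since $\CPH(\inputtrie) = \SHT(\mathcal{W}_{\inputtrie})$ is again a sequence hash tree, the greedy factorization $\fact(P) = P_0,\ldots,P_k$, the descendant characterization of Lemma~\ref{lemma:decendant_condition}, and the maximal-reach-pointer tracking through Condition~(b) all apply without change; the only reinterpretation is that node ids now range over the equivalence-class representatives $\min\{\mathcal{C}_v\}$, so each surviving candidate is reported directly as a trie node. The pattern encoding $\PD(P)$ is computed once in $O(m)$ time, and navigating $\CPH(\inputtrie)$ during factorization compares its symbols only against the explicitly stored, edge-sorted labels of the heap, never against the text-side PD encodings; since the degree bound of Lemma~\ref{lem:number_of_branches} depends only on edge labels being PD symbols bounded by the node depth, it holds for $\CPH(\inputtrie)$ too, and each navigation step is a binary search over at most $\min\{m,h\}$ edges. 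This gives $O(m\log(\min\{m,h\}))$ for factorization, as in the string case; offsetting a tracked position by $\lensum(l-1)$ is an $\anc$ query on $\inputtrie$ answerable in $O(1)$, and enumerating the descendants and $\mrp$-preimages contributes the additive $O(\occ)$.

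The one genuine difference is verification. In the string case each of the at most $\min\{|P_1|,\ldots,|P_k|\}\le m/k$ surviving positions is verified by testing the zero-positions of $\PD(P_1),\ldots,\PD(P_k)$ against the text, which by Lemma~\ref{lem:number_of_zeros} number at most $\sigma_{P_l}\le\sigma_P$ per block, hence $O(k\sigma_P)$ per candidate and $O(m\sigma_P)$ in total, each symbol read in $O(1)$. On the trie we cannot store these text-side PD encodings, since their total length is $\Omega(N^2)$; instead I would read each required symbol $\PD(\inputtrie[i..])[\,\cdot\,]$ through the random-access structure of Lemma~\ref{lem:nearest_ancestor}, which answers one symbol in $O(\triesigma)$ time and occupies $O(N\triesigma)$ space. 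The number of symbols tested is unchanged at $O(m\sigma_P)$, so multiplying by the per-symbol cost yields $O(m\sigma_P\triesigma)$ for this phase.

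Summing the three contributions gives the claimed $O\!\big(m(\sigma_P\triesigma+\log(\min\{m,h\}))+\occ\big)$ time, and the space is that of the edge-sorted $\CPH(\inputtrie)$ with maximal reach pointers reused together with the structure of Lemma~\ref{lem:nearest_ancestor}, all within $O(N\triesigma)$. The step I expect to need the most care is confirming that the candidate-survival bound $m/k$ and the inapplicability of Condition~(a) when $k\ge 2$ genuinely persist on the trie: both rest on the incremental $\SHT$ construction forcing a node to exist whenever a PD-prefix recurs, so I would verify that the equivalence-class representation underlying $\CPH(\inputtrie)$ preserves $\approx$ faithfully --- which it does, by suffix-consistency of $\FP$ and Lemma~\ref{lem:FP_PD} --- and that the id bookkeeping through $\anc$ offsets always stays on the intended path string of $\inputtrie$.
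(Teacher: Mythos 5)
Your proposal is correct and takes essentially the same route as the paper: the paper's proof likewise reuses the string-case algorithm of Section~\ref{sec:matching_string} unchanged and replaces each text-side access to a PD symbol in the verification phase with a query to the random-access structure of Lemma~\ref{lem:nearest_ancestor}, at $O(\sigma_{\inputtrie})$ time per symbol and $O(N\sigma_{\inputtrie})$ space, which is exactly how you derive the $m\sigma_P\sigma_{\inputtrie}$ term. The extra details you verify --- that the degree bound of Lemma~\ref{lem:number_of_branches} carries over, that offsets are $O(1)$ level-ancestor queries on $\inputtrie$, and that the equivalence-class representation preserves $\approx$ via suffix-consistency of $\FP$ --- are all points the paper leaves implicit, not deviations from its argument.
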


\section*{Acknowledgments}
This work was supported by JSPS KAKENHI Grant Numbers 
JP18K18002 (YN) and JP21K17705 (YN),
and by JST PRESTO Grant Number JPMJPR1922 (SI).

\clearpage

\bibliographystyle{abbrv}
\bibliography{ref}

\begin{thebibliography}{10}

\bibitem{AlstrupH00}
S.~Alstrup and J.~Holm.
\newblock Improved algorithms for finding level ancestors in dynamic trees.
\newblock In U.~Montanari, J.~D.~P. Rolim, and E.~Welzl, editors, {\em {ICALP}
  2000}, volume 1853 of {\em Lecture Notes in Computer Science}, pages 73--84.
  Springer, 2000.

\bibitem{amir_improved_dynamic_1995}
A.~Amir, M.~Farach, R.~M. Idury, J.~A.~L. Poutr{\'e}, and A.~A. Sch{\"a}ffer.
\newblock Improved dynamic dictionary matching.
\newblock {\em Information and Computation}, 119(2):258--282, 1995.

\bibitem{Baker93}
B.~S. Baker.
\newblock A theory of parameterized pattern matching: algorithms and
  applications.
\newblock In {\em STOC 1993}, pages 71--80, 1993.

\bibitem{baker95parameterized}
B.~S. Baker.
\newblock Parameterized pattern matching by {B}oyer-{M}oore type algorithms.
\newblock In {\em Proc. 6th Annual {ACM}-{SIAM} Symposium on Discrete
  Algorithms}, pages 541--550, 1995.

\bibitem{Baker96}
B.~S. Baker.
\newblock Parameterized pattern matching: Algorithms and applications.
\newblock {\em J. Comput. Syst. Sci.}, 52(1):28--42, 1996.

\bibitem{BenderF04}
M.~A. Bender and M.~Farach-Colton.
\newblock The level ancestor problem simplified.
\newblock {\em Theor. Comput. Sci.}, 321(1):5--12, 2004.

\bibitem{ChoNPS15}
S.~Cho, J.~C. Na, K.~Park, and J.~S. Sim.
\newblock A fast algorithm for order-preserving pattern matching.
\newblock {\em Inf. Process. Lett.}, 115(2):397--402, 2015.

\bibitem{coffman}
E.~Coffman and J.~Eve.
\newblock File structures using hashing functions.
\newblock {\em Communications of the ACM}, 13:427--432, 1970.

\bibitem{CrochemoreIKKLP16}
M.~Crochemore, C.~S. Iliopoulos, T.~Kociumaka, M.~Kubica, A.~Langiu, S.~P.
  Pissis, J.~Radoszewski, W.~Rytter, and T.~Walen.
\newblock Order-preserving indexing.
\newblock {\em Theor. Comput. Sci.}, 638:122--135, 2016.

\bibitem{ehrenfeucht_position_heaps_2011}
A.~Ehrenfeucht, R.~M. McConnell, N.~Osheim, and S.-W. Woo.
\newblock Position heaps: A simple and dynamic text indexing data structure.
\newblock {\em Journal of Discrete Algorithms}, 9(1):100--121, 2011.

\bibitem{FuCLN07}
T.~Fu, K.~F. Chung, R.~W.~P. Luk, and C.~Ng.
\newblock Stock time series pattern matching: Template-based vs. rule-based
  approaches.
\newblock {\em Eng. Appl. Artif. Intell.}, 20(3):347--364, 2007.

\bibitem{FujisatoNIBT18}
N.~Fujisato, Y.~Nakashima, S.~Inenaga, H.~Bannai, and M.~Takeda.
\newblock Right-to-left online construction of parameterized position heaps.
\newblock In {\em PSC 2018}, pages 91--102, 2018.

\bibitem{FujisatoNIBT19b}
N.~Fujisato, Y.~Nakashima, S.~Inenaga, H.~Bannai, and M.~Takeda.
\newblock The parameterized position heap of a trie.
\newblock In {\em {CIAC} 2019}, pages 237--248, 2019.

\bibitem{IIT11}
T.~I, S.~Inenaga, and M.~Takeda.
\newblock Palindrome pattern matching.
\newblock In R.~Giancarlo and G.~Manzini, editors, {\em {CPM} 2011}, volume
  6661 of {\em Lecture Notes in Computer Science}, pages 232--245. Springer,
  2011.

\bibitem{KimH16}
H.~Kim and Y.~Han.
\newblock {OMPPM:} online multiple palindrome pattern matching.
\newblock {\em Bioinform.}, 32(8):1151--1157, 2016.

\bibitem{KimEFHIPPT14}
J.~Kim, P.~Eades, R.~Fleischer, S.~Hong, C.~S. Iliopoulos, K.~Park, S.~J.
  Puglisi, and T.~Tokuyama.
\newblock Order-preserving matching.
\newblock {\em Theor. Comput. Sci.}, 525:68--79, 2014.

\bibitem{MatsuokaAIBT16}
Y.~Matsuoka, T.~Aoki, S.~Inenaga, H.~Bannai, and M.~Takeda.
\newblock Generalized pattern matching and periodicity under substring
  consistent equivalence relations.
\newblock {\em Theor. Comput. Sci.}, 656:225--233, 2016.

\bibitem{ParkBALP20}
S.~G. Park, M.~Bataa, A.~Amir, G.~M. Landau, and K.~Park.
\newblock Finding patterns and periods in cartesian tree matching.
\newblock {\em Theor. Comput. Sci.}, 845:181--197, 2020.

\bibitem{SongGRFLP21}
S.~Song, G.~Gu, C.~Ryu, S.~Faro, T.~Lecroq, and K.~Park.
\newblock Fast algorithms for single and multiple pattern {Cartesian} tree
  matching.
\newblock {\em Theor. Comput. Sci.}, 849:47--63, 2021.

\bibitem{Weiner}
P.~Weiner.
\newblock Linear pattern-matching algorithms.
\newblock In {\em Proc. of 14th IEEE Ann. Symp. on Switching and Automata
  Theory}, pages 1--11, 1973.

\bibitem{westbrook_fast_incre_1992}
J.~Westbrook.
\newblock Fast incremental planarity testing.
\newblock In {\em Proc. ICALP 1992}, number 623 in LNCS, pages 342--353, 1992.

\end{thebibliography}

\clearpage
\appendix

\section{Figures}



\begin{figure}[!h]
  \centerline{
    \includegraphics[scale=0.4]{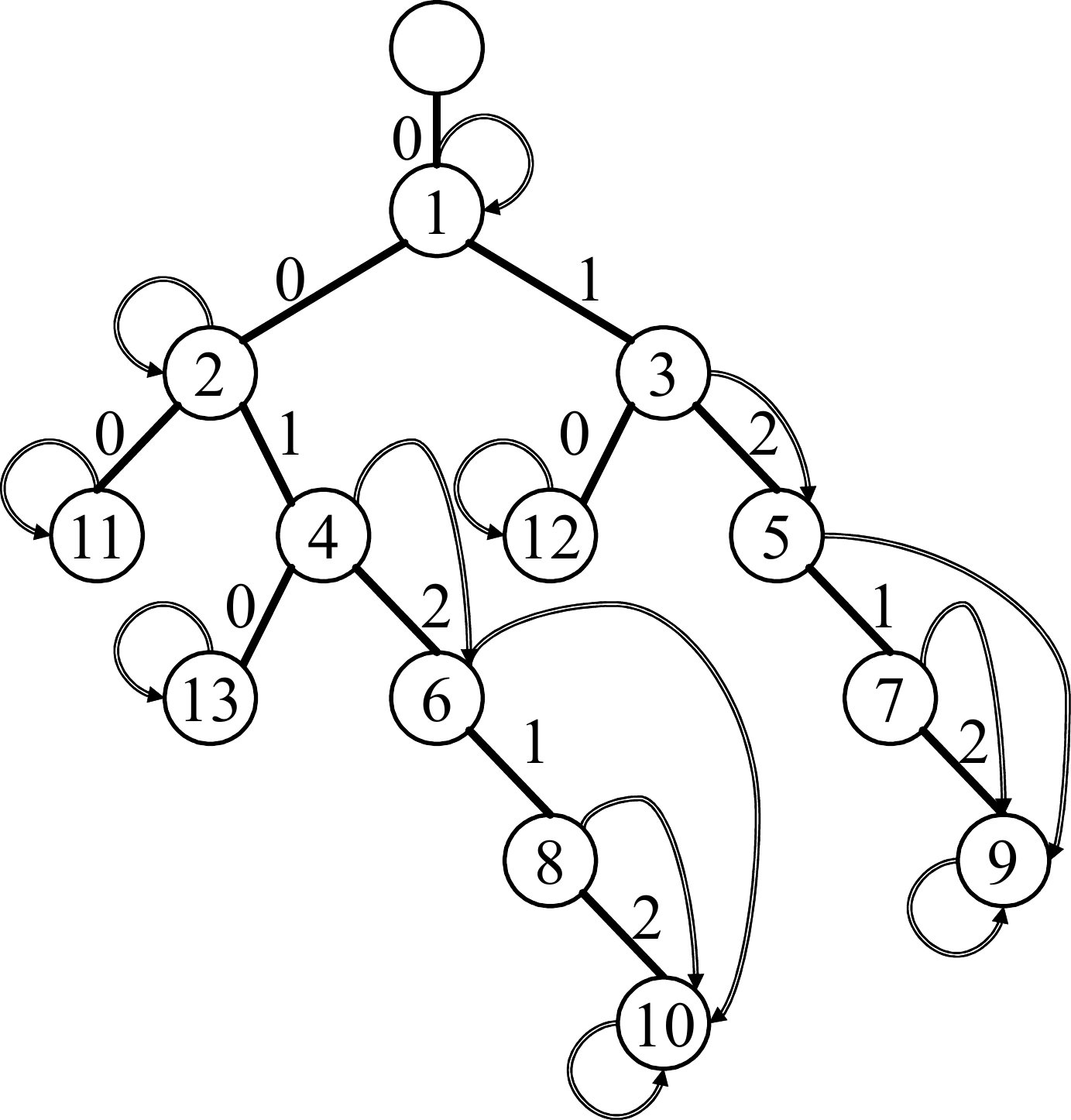}
    \hfill
    \includegraphics[scale=0.38]{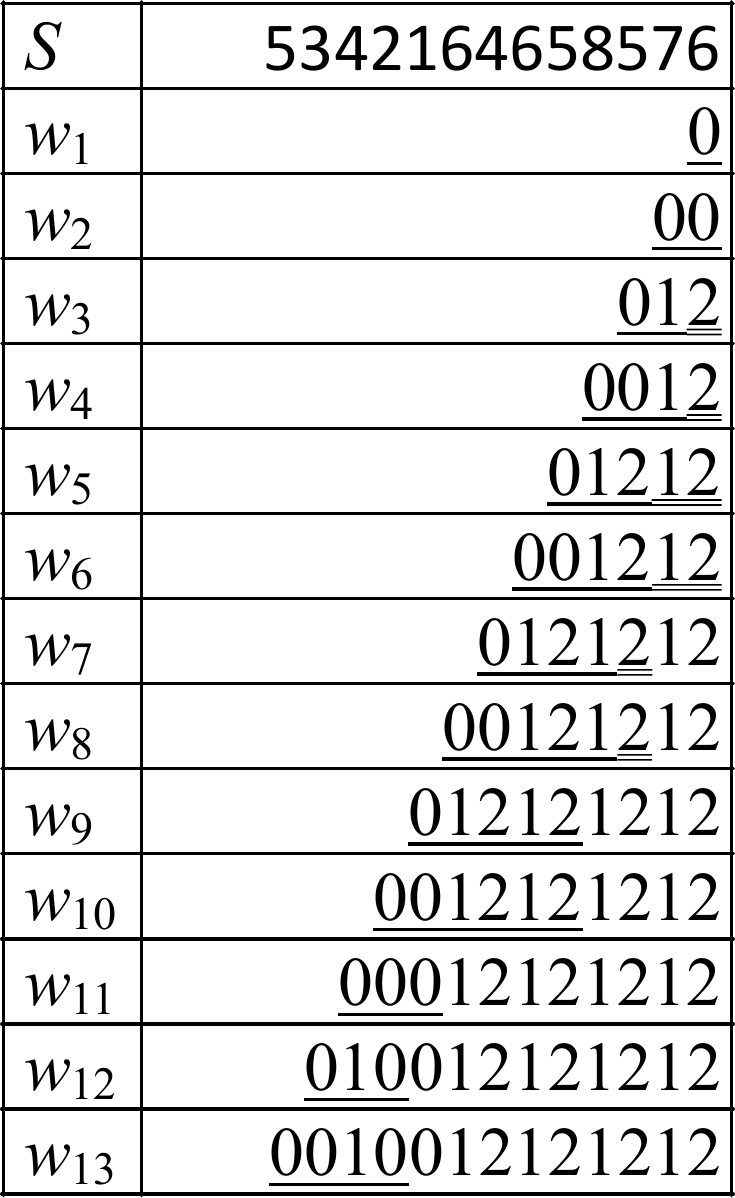}
  }
  \caption{The Cartesian position heap of string $S = \mathtt{264275843647576}$ together with the maximal reach pointers (doubly-lined arcs). For each $w_i = \PD(S[n-i+1..])$, the singly-underlined prefix is the string that is represented by the node $u_i$ in $\CPH(S)$, and the doubly-underlined substring is the string skipped by the maximal reach pointer.}
  \label{fig:CPH_mrp}
\end{figure}

\begin{figure}[!h]
  \centerline{
    \includegraphics[scale=0.4]{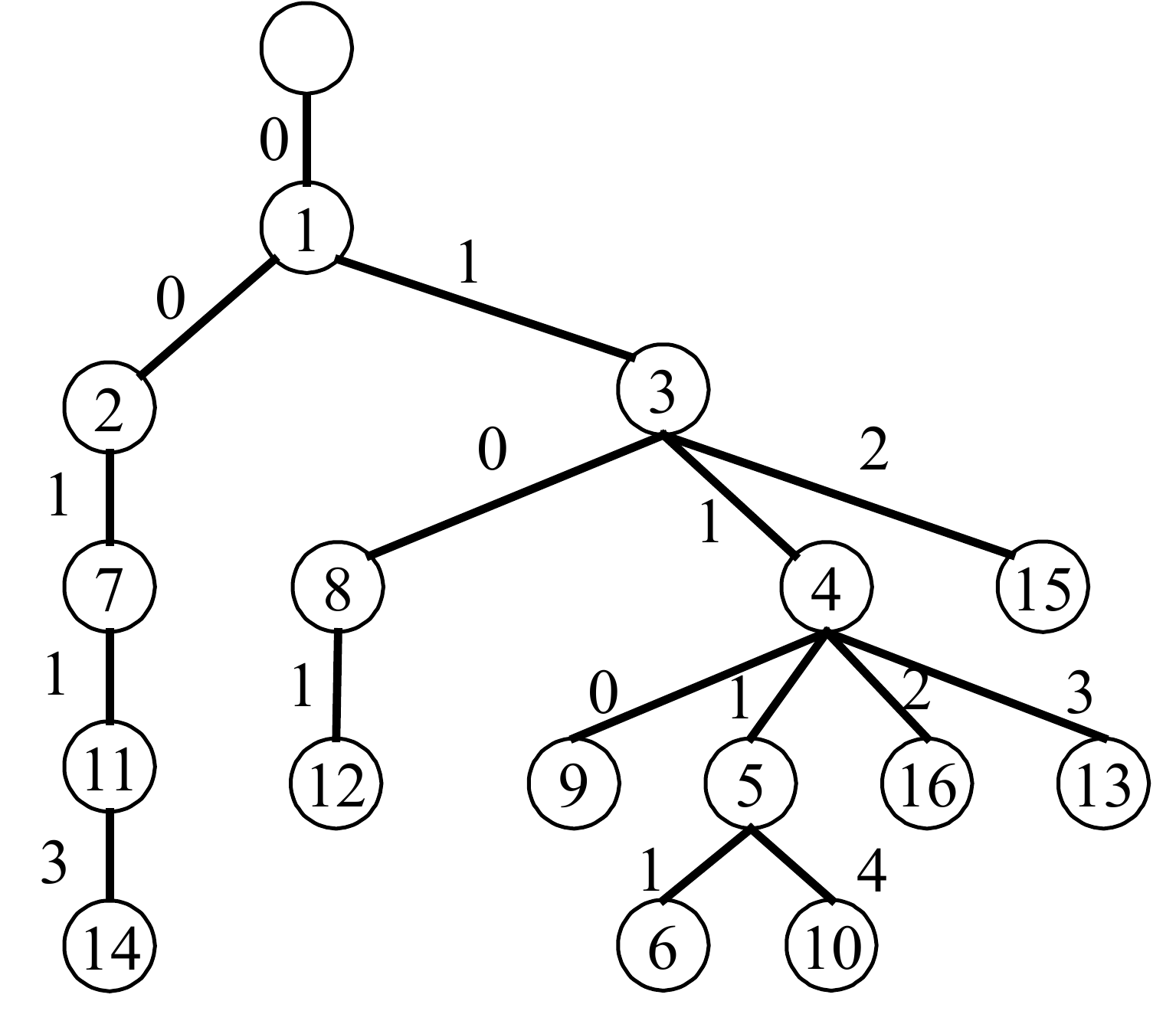}
    \hfill
    \includegraphics[scale=0.38]{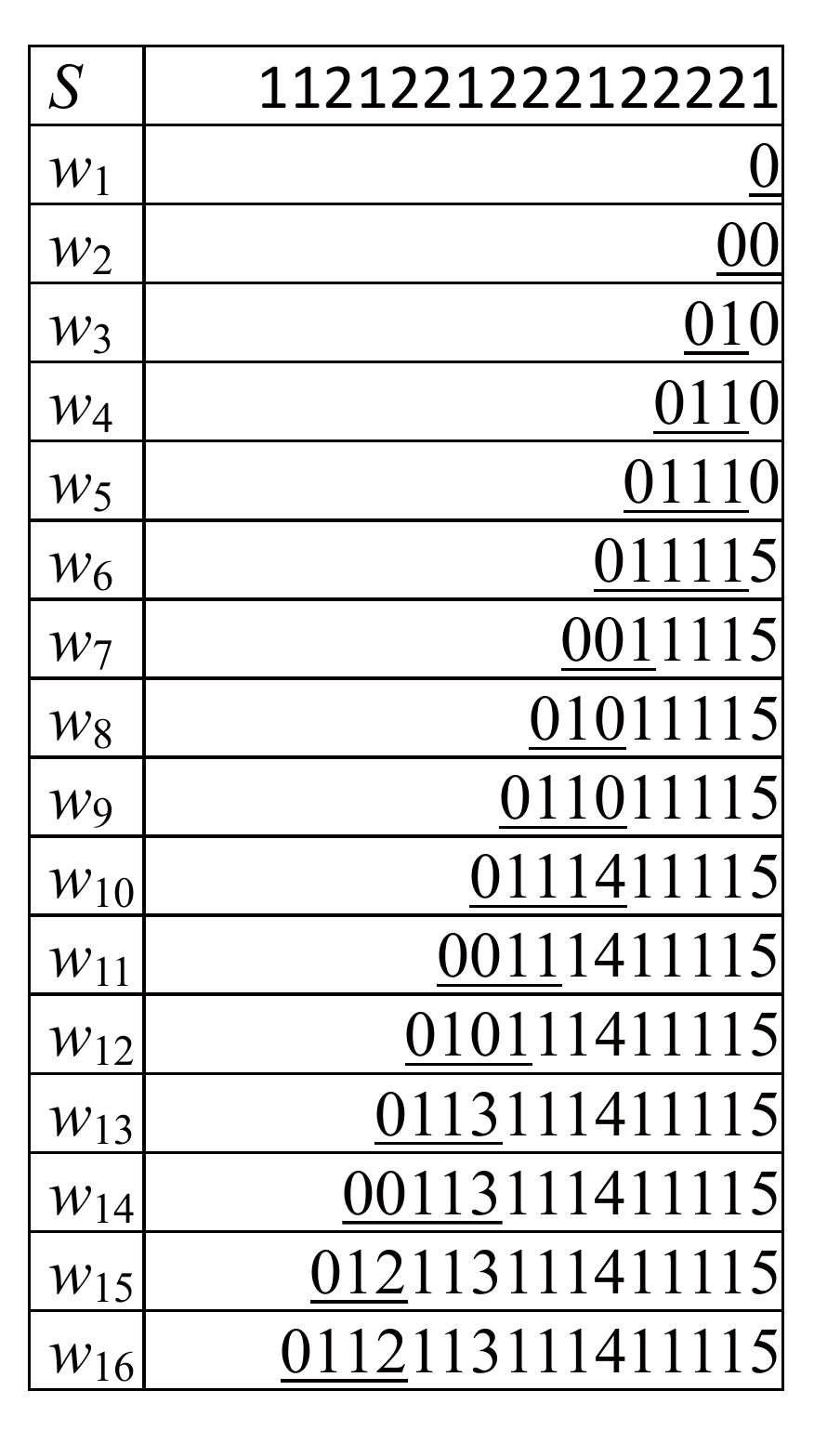}
  }
  \caption{The string $S$ of Lemma~\ref{lem:many_branches} with $k = 4$, i.e., $S = \mathtt{1121221222122221}$ and its Cartesian-tree position heap $\CPH(S)$. For each $w_i = \PD(S[n-i+1..])$, the underlined prefix is represented by the node of $\CPH(S)$ that corresponds to $w_i$. Node $011$ has $k = 4$ out-going edges. This example also shows that the upper bound of Lemma~\ref{lem:number_of_branches} is tight, since node $011$ has $|011|+1 = 4$ out-going edges.}
  \label{fig:lower_bound_branches}
\end{figure}

\begin{figure}[!h]
  \centerline{
    \includegraphics[scale=0.35]{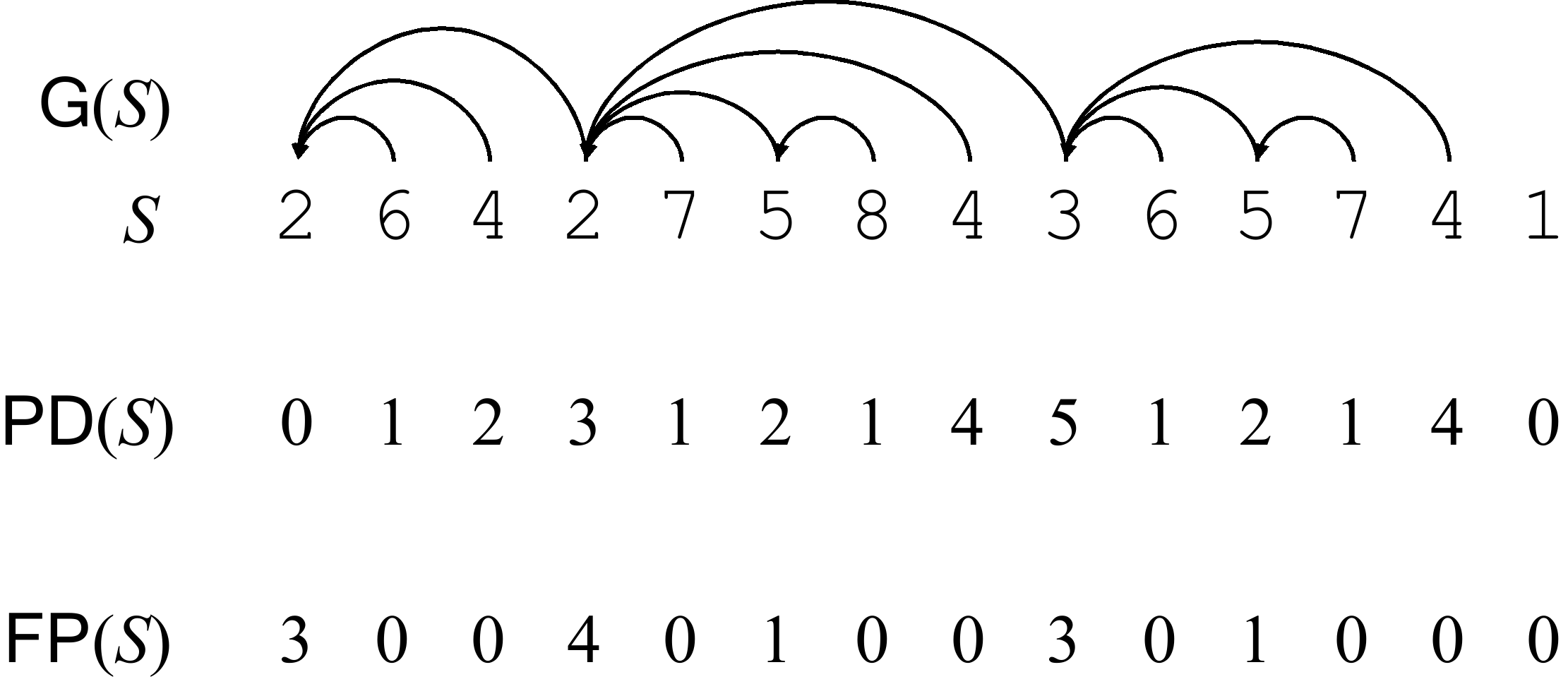}
  }
  \caption{DAG $\mathsf{G}(S)$, $\PD(S)$, and $\FP(S)$
  for string $S = \mathtt{26427584365741}$.}
  \label{fig:DAG}
\end{figure}

\begin{figure}[!h]
  \begin{minipage}{0.69\textwidth}
    \raisebox{10mm}{\includegraphics[scale=0.35]{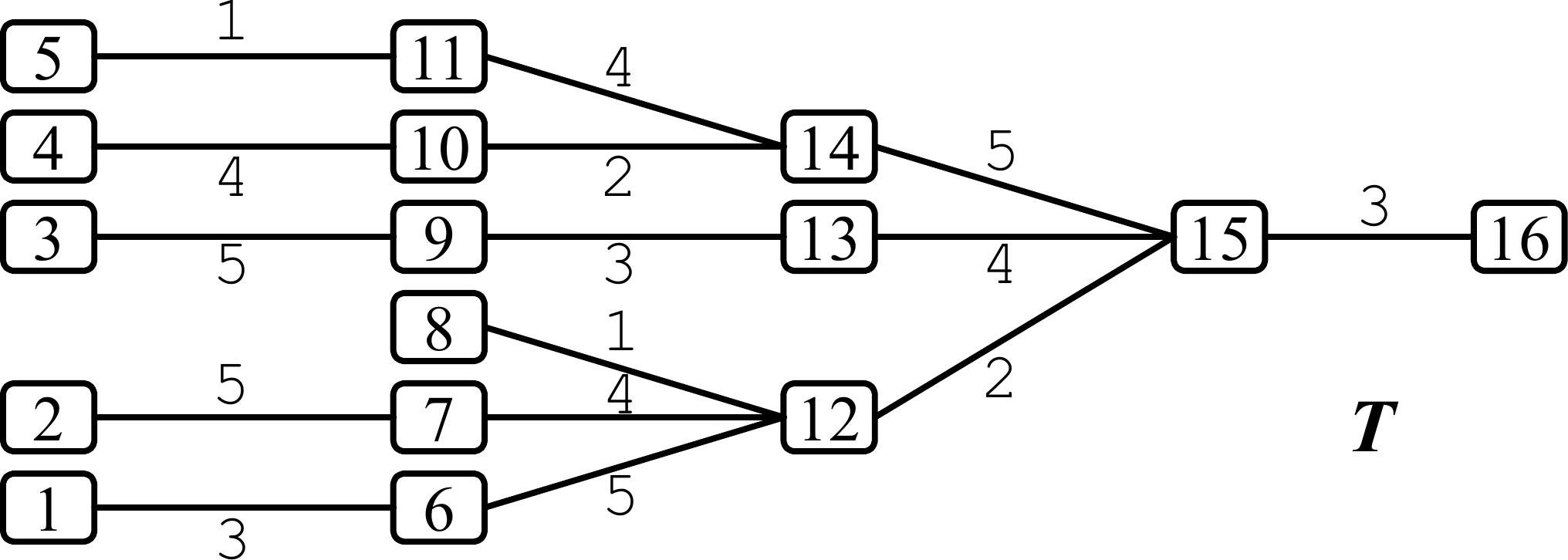}}
    \includegraphics[scale=0.35]{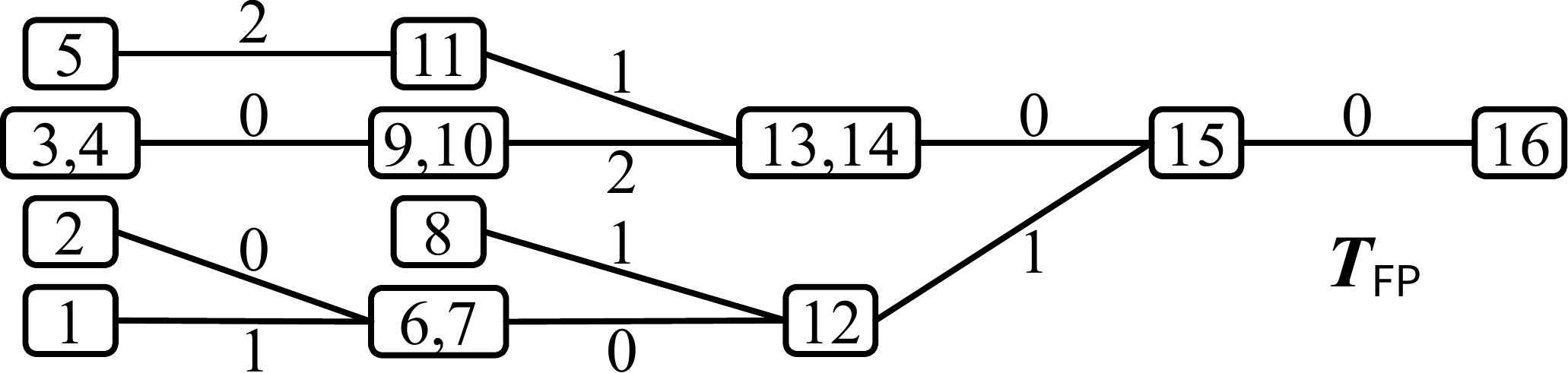}
  \end{minipage}
  \begin{minipage}{0.29\textwidth}
    \includegraphics[scale=0.33]{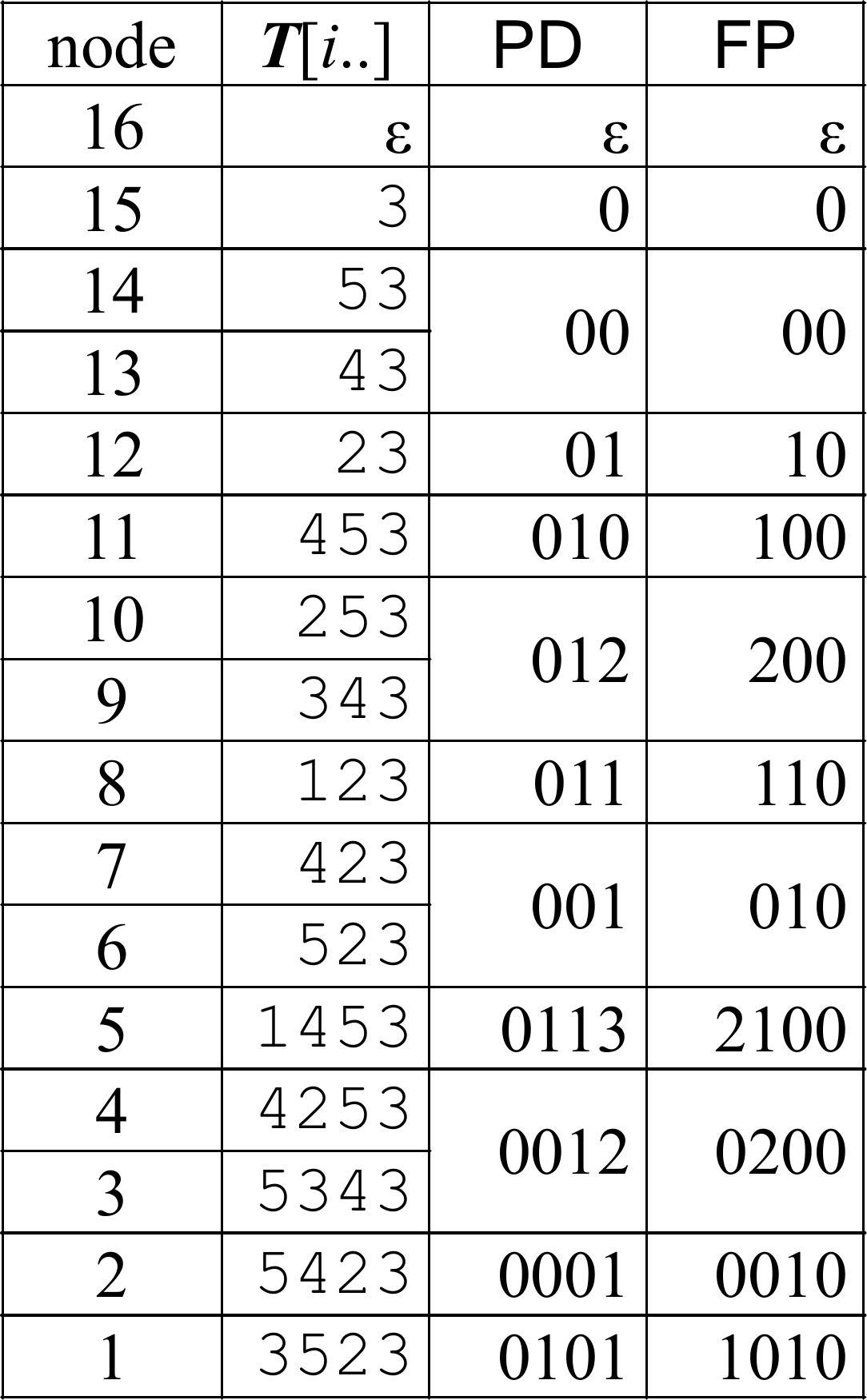}
  \end{minipage}
  \vspace*{2mm}
  \caption{Left upper: An example of input trie $\inputtrie$. Left lower: The FP-trie $\fptrie$ that is obtained from the original trie $\inputtrie$. For instance, the FP encodings of the two path strings $\inputtrie[3..] = \mathtt{5343}$ and $\inputtrie[4..] = \mathtt{4253}$ have the same FP encoding $0200$ and thus the node id's $3$ and $4$ are stored in a single node in $\fptrie$. The representative (the id) of the node $\{3, 4\}$ in $\fptrie$ is $\min\{3, 4\} = 3$.}
  \label{fig:input_trie}
\end{figure}

\begin{figure}[!h]
  \centerline{
    \includegraphics[scale=0.4]{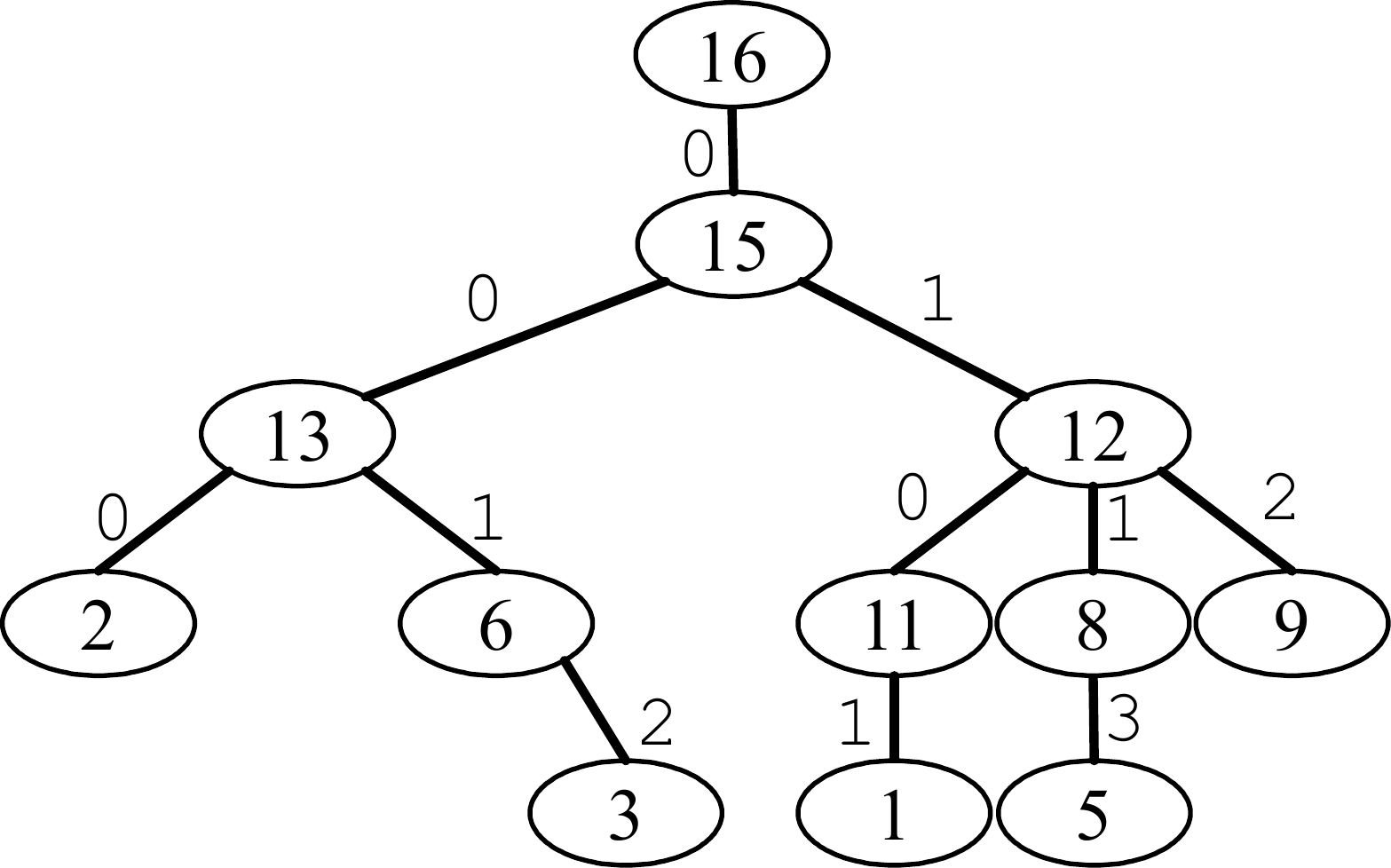}
  }
  \caption{$\CPH(\inputtrie)$ for the trie $\inputtrie$ of Fig.~\ref{fig:input_trie}, where every node $v$ store the representatives $1, 2, 3, 5, 6, 8, 9, 11, 12, 13, 15, 16$ of the corresponding equivalence class $\mathcal{C}_v$.}
  \label{fig:CPM_trie}
\end{figure}

\begin{figure}[!h]
  \centerline{
    \includegraphics[scale=0.4]{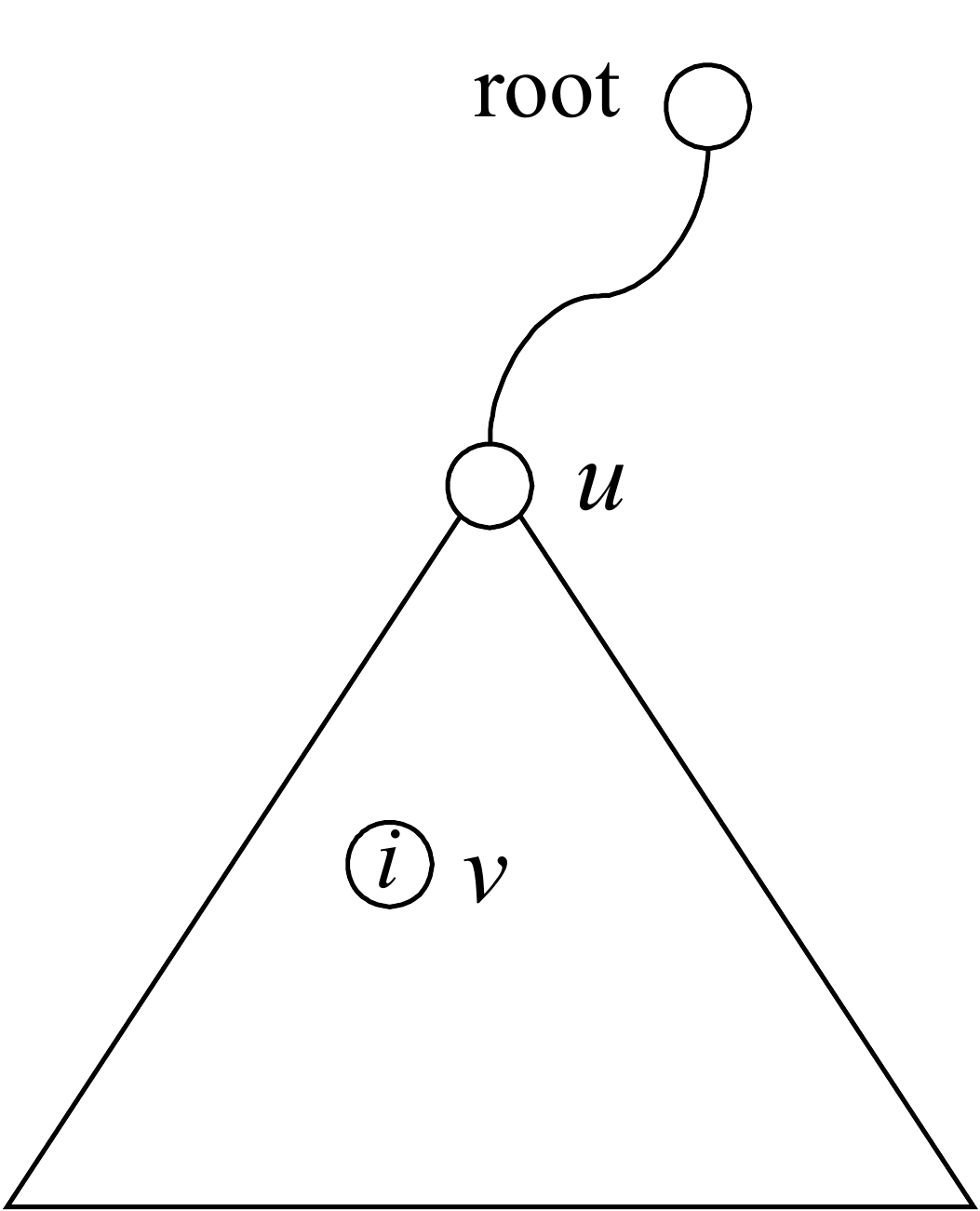}
    \hfill
    \includegraphics[scale=0.4]{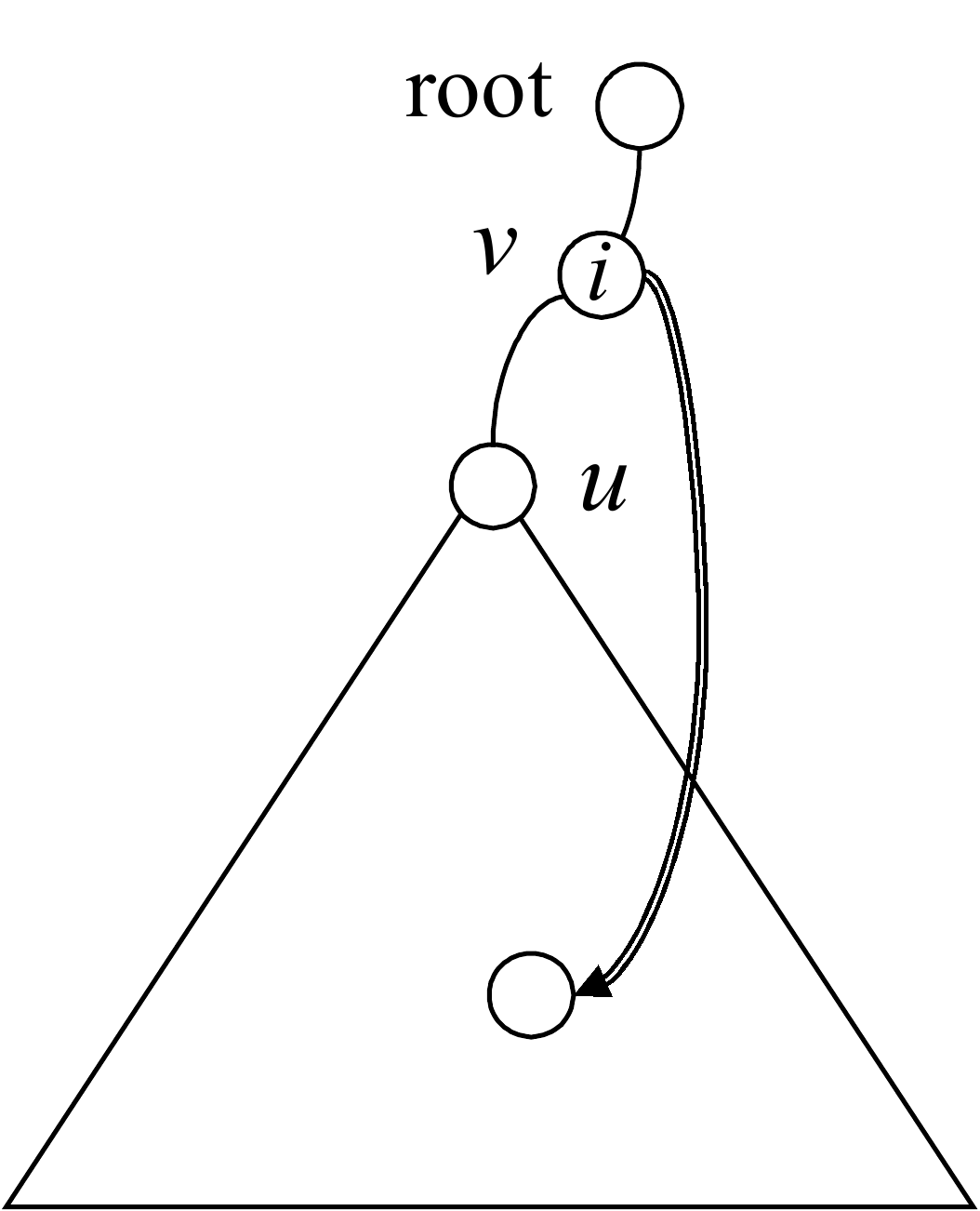}
  }
  \caption{Left: Illustration for Condition (a) of Lemma~\ref{lemma:decendant_condition}. Right: Illustration for Condition (b) of Lemma~\ref{lemma:decendant_condition}, where the doubly-lined arc represents the maximal reach pointer.}
  \label{fig:matching}
\end{figure}

\end{document}